\theoremstyle{dotless}
\newtheorem{theorem}{Theorem}[section]
\newtheorem{lemma}[theorem]{Lemma}
\newtheorem{corollary}[theorem]{Corollary}
\newtheorem{example}[theorem]{Example}
\newtheorem{proposition}[theorem]{Proposition}
\newtheorem{definition}[theorem]{Definition}
\newtheorem{remark}[theorem]{Remark}
\newcommand\argmax{{\rm argmax}}
\begin{document}
\title{Interpolating between matching and hedonic pricing models\footnote{The author is pleased to acknowledge the support of a University of Alberta start-up grant and National Sciences and Engineering Research Council of Canada Discovery Grant number 412779-2012. 
}}

\author{Brendan Pass\footnote{Department of Mathematical and Statistical Sciences, 632 CAB, University of Alberta, Edmonton, Alberta, Canada, T6G 2G1 pass@ualberta.ca.}}
\maketitle
\begin{abstract} 
We consider the theoretical properties of a model which encompasses bi-partite matching under transferable utility on the one hand, and hedonic pricing on the other.  This framework is intimately connected to tripartite matching problems (known as multi-marginal optimal transport problems in the mathematical literature).  We exploit this relationship in two ways; first, we  show that a known structural result from multi-marginal optimal transport can be used to establish an upper bound on the dimension of the support of stable matchings.  Next, assuming the distribution of  agents on one side of the market is  continuous, we identify a condition on their preferences that ensures purity and uniqueness of the stable matching; this condition is a variant of a known condition in the mathematical literature, which guarantees analogous properties in the multi-marginal optimal transport problem.   We exhibit several examples of surplus functions for which our condition is satisfied, as well as some for which it fails.
\end{abstract}

\section{Introduction}

This paper considers the theoretical properties of a general model, which encompasses both the transferable utility matching model of Shapley and Shubik \cite{SS} and Becker \cite{Becker73}, extended to continuous type spaces by Gretsky, Ostroy and Zame \cite{GOZ}, and the hedonic model of Rosen \cite{Rosen}, whose theoretical properties (in a continuous,  multi-dimensional setting)  were studied by Ekeland \cite{ekeland3,ekeland2} and Chiappori-McCann-Nesheim \cite{CMN}.

In the hedonic model, agents on two sides of a market (eg, buyers and sellers) are matched together according to  their preferences to exchange certain goods, assuming they are indifferent to which partner they do business with.  Letting $X$ and $Y$ be spaces of buyers and sellers, respectively, and $Z$ the set of goods that can feasibly be produced, we assume that agents' preferences are encoded respectively by functions $u(x,z)$ and $v(y,z)$, expressing the utilities of buyer $x \in X$ and seller $y \in Y$ if they purchase or produce a product of type $z \in Z$, respectively.    The main problem is then to determine which buyers match with which sellers,  which goods they exchange and the prices they exchange for them, in equilibirum.

In the matching model, agent $x$ (respectively $y$) has a preference $u(x,y)$ (respectively $v(x,y)$) to match with agent $y$ (respectively $x$).  Together, a match between $x$ and $y$  generates a \emph{joint} utility, or surplus, $s(x,y)=u(x,y) +v(x,y)$.  Utility can then be \emph{transferred} in the form of a payment from one agent to the other; by this mechanism, the total surplus $s(x,y)$ can be divided in any way between the two agents.  Here the good $z$ to be exchanged (or the non monetary terms of the contract) does not affect agents' preferences.

Recently, Dupuy, Galichon and Zhao \cite{DGZ} formulated a hybrid model in which agents have preferences which depend on \emph{both}  their partners and the product under exchange.     In that paper, $x$ and $y$ represent agents on different sides of a marriage market, and $z$ the location where they would agree to settle; however, as was noted by the authors, the model has much wider potential applicability.  In family economics, for instance, when a couple match together, there are many conditions of the match which can affect the surplus generated.  In addition to location, a couple may choose to marry or to live together without marrying, and to have one or more children, for example. These decisions affect the surplus differently depending on the couple; see, for example, \cite{MourifieSiow14}.
When considering buyers and sellers, it also seems natural in many scenarios to allow consumers' preferences to depend on both the producer he does business with and the good he receives.  For example, consumers often exhibit \textit{brand loyalty}; they may be willing to pay more (for the same good) when dealing with one company rather than another.  On the other hand, producers' preferences can also depend on the consumers they sell to; for example, mortgage lenders often offer better rates to clients with higher credit scores, reflecting the fact that they prefer to do business with more credit worthy borrowers.  Another example occurs in insurance models like the one of Rothschild and Stiglitz \cite{RothschildStiglitz}; insurance companies offer the same policy at different prices to different consumers depending on how their characteristics influence the risk of a claim.  

The theortical properties of these hybrid models do not seem to have received much attention.  In this paper, we study this model when the type spaces $X$ and $Y$ and the space of feasible contracts $Z$ are all continuous.  In both the classical matching problem, and the hedonic problem, conditions are well known which ensure that the equilibrium, or stable matching, is unique and pure.  In the matching problem, this condition is  known as the \textit{twist}, or \textit{generalized Spence-Mirrlees} condition.   One natural  question in the present setting is to identify conditions on the joint surplus function (which is now a function of $x,y$ and $z$, without any specific form) in the more general model which ensure purity and uniqueness of the stable match.  

Both the strict matching and hedonic problems have well established connections to a variational problem known as \emph{optimal transportation}; economically, this is exactly the social planner's problem of matching the agents in order to maximize their average surplus (a detailed introduction to this subject can be found in Galichon \cite{Galichon16}, Santambrogio \cite{Santambrogio15p} and Villani \cite{V2}).  The generalized model studied here turns out to have natural connections to both the classical optimal transport problem, and variant of it where there are three (rather than two) measures to be matched (known in the mathematics literature as a multi-marginal optimal transport problem).    Our main contribution here is to establish and exploit these connections to uncover insights into matching patterns for this generalized matching-hedonic model.  As an immediate application of the multi-marginal optimal transport point of view, we establish an upper bound on the dimension of the support of stable matchings (which are measures on the product space $X \times Y \times Z$) in terms of the signature of the off-diagonal part of the Hessian of $s$.   We then find conditions on the functions $u$ and $v$, and the measures $\mu$ and $\nu$ ensuring existence, uniqueness and purity of equilibria.   Our condition here is a weaker variant of the  \textit{twist on splitting sets} condition, which is known to ensure purity and uniqueness  in the multi-marginal optimal transport problem; we will call the condition for the hybrid matching problem the \textit{twist on $z$-trivial splitting sets}.


In addition, due to recent work of Chiappori, McCann and Nesheim \cite{CMN}, it is now clear that the hedonic problem is actually equivalent to a matching problem, with a surplus  equal to the the maximum possible joint utility for buyers and sellers, among all possible goods; we often refer to the analogous maximized surplus $\bar s(x,y)$ (see \eqref{redsur}) in our setting as the \emph{reduced surplus}, as we have reduced the number of variables from three to two (that is, $\bar s$ depends only on $x$ and $y$).   This equivalence persists in our setting. 
 In this simplified but equivalent bipartite matching setting, the twist condition on the maximal surplus \eqref{redsur} ensures the uniqueness and purity of the stable match.   It is desirable then, to understand when the twist condition on the surplus \eqref{redsur} holds; we show that, under certain conditions, this is essentially equivalent to our twist on $z$-trivial  splitting sets condition on $ s$.  We believe that this equivalence indicates that twist on $z$-trivial splitting sets is a natural condition for the hybrid problem.

In the next section, we outline the model under consideration and establish some of its basic properties.  In the third section, we use the connection with multi-marginal optimal transport to establish a result on the local structure (ie, the dimension of the support) of stable matchings.   In section four we develop  our sufficient condition for purity and uniqueness of stable matchings, while section five is devoted to the reformulation of our problem as a strict matching problem (with a reduced surplus function) and the demonstration that the classical twist condition of $\bar s$ is equivalent to our twist on $z$-trivial splitting sets condition on the surplus $s$.   The sixth section presents some examples, while we offer a brief conclusion in the final section.

Short, simple proofs of mathematical results are included within the body of the paper; longer or more involved mathematical arguments are relegated to an appendix, to avoid interupting the flow of the paper.

\section{The general model and basic properties}\label{sect: model}
We consider heterogeneous distributions of buyer and seller types, encoded respectively by compactly supported  Borel probability measures $\mu$ on $X \subseteq \mathbb{R}^{n_x}$ and $\nu$ on $Y \subseteq \mathbb{R}^{n_y}$, and a set of feasible goods, parameterized by $Z \subseteq \mathbb{R}^{n_z}$. Tthe sets $X$, $Y$ and $Z$ will be the closures of open and bounded sets $X^0,Y^0$ and $Z^0$, respectively, with smooth boundaries.  Each buyer will  buy exactly one good; each seller will produce and sell exactly one good.\footnote{It would be straightforward to enhance the model to allow for unequal numbers of buyers and sellers, and to allow both to decline to participate in any match; this can be done as in \cite{CMN}, by augmenting the measures $\mu$ and $\nu$ with Dirac masses, representing null buyers and sellers.  As this is tangential to our main purpose here, we work instead with the simpler model in which all agents participate.}

The preference of a buyer of type $x$ to purchase a good of type $z$ from a seller of type $y$ will be given by a function $u(x,y,z)$, while the preference of a seller of type $y$ to sell a good of type $z$ to a buyer of type $x$ is given by $v(x,y,z)$.  We will assume throughout the paper that $u$ and $v$ are uniformly Lipschitz; stonger regularity hypotheses will be adopted at various specific points.  
Utilities will be quasilinear, so that the utility derived by a buyer purchasing a good of type $z$ from a seller of type $y$ for a price $p$ will be 

$$
u(x,y,z) -p
$$
and similarly, the  utility derived by a seller selling a good of type $z$ to a buyer of type $x$ for a price $p$ will be 

$$
v(x,y,z)+p.
$$
We will denote by $s(x,y,z)$ the total, or joint, surplus generated when a buyer of type $x$ purchases a good of type $z$ from a seller of type $y$:
$$ 
s(x,y,z) =u(x,y,z) +v(x,y,z).
$$

As the utility can freely be transferred from one partner to another, the analytical properties of $s$, rather than $u$ and $v$ separately, are most relevant in determining the purity and uniqueness of equilibrium.


We define a \emph{matching} as a probability measure $\gamma$ on $X \times Y \times Z$ whose first marginal is $\mu$ and whose second marginal is $\nu$; that is
$$
\gamma(A \times Y \times Z) =\mu(A), \text{  }\gamma(X \times B \times Z) =\nu(B)
$$
for all Borel $A \subseteq X$, $B \subseteq Y$.  This represents an assignment of the agents in the sets $X$ and $Y$  into pairs, and assigns to each pair a good from the set $Z$ to be exchanged.  We will denote by $\Gamma_{XYZ}(\mu,\nu)$ the set of all matchings  of $\mu$ and $\nu$ on $X \times Y \times Z$.

We will say that a mapping $T:X \rightarrow Y$ \textit{pushes $\mu$ forward to $\nu$}, and write $\nu =T_{\#}\mu$ if $\mu(T^{-1}(B)) =\nu(B)$ for all Borel $B \subseteq Y$.

 For a given measure $\gamma$ on $X \times Y \times Z$, we denote by $\gamma_{XY}$  its projection onto $X \times Y$, a measure on $X \times Y$ defined by $\gamma_{XY}(C) =\gamma(C \times Z)$ for any $C \subset X \times Y$.  The measures $\gamma_{XZ}$ and $\gamma_{YZ}$  are defined analogously.  Note that $\gamma_{XY} = (P_{XY})_\# \gamma$, where $P_{XY}: X \times Y \times Z \rightarrow X\times Y$ is the projection map, $P_{XY}(x,y,z) :=(x,y)$.  It is also worth noting that if $\gamma$ is a matching of $\mu$ and $\nu$, then $\gamma_{XY}$ has marginals $\mu$ and $\nu$.

Given a matching $\gamma$, functions $U(x)$ and $V(y)$ are called \emph{payoff} functions for $\gamma$ if 
$$
U(x)+V(y) = s(x,y,z)
$$
for $\gamma$ almost every $(x,y,z)$.  For any matching, points in the \emph{support}\footnote{The support of $\gamma$  is the smallest closed set $spt(\gamma) \subset X \times Y \times Z$ with full mass, $\gamma(spt(\gamma))=1$. } $spt(\gamma)$ of $\gamma$ (and hence in the equality set $\{U(x)+V(y) = u(x,y,z) + v(x,y,z)\}$ for payoff functions $U$ and $V$) represent buyer-seller pairs who are matched together by $\gamma$, together with the good they exchange.  Payoff functions then represent a division of the surplus between matched pairs.  Given such a triple $(x,y,z) \in spt(\gamma)$ and payoffs $U(x)$ and $V(y)$, the price\footnote{In contrast to the strict hedonic problem, one cannot hope for a market clearing pricing function $p(z)$ which is independent of $x$ and $y$ here; it is possible that  the same good may be exchanged between different pairs of buyers and seller for different prices.}  that $y$ charges $x$ in exchange for the good $z$ is given 

\begin{equation}
 p_{x,y,z}:=u(x,y,z)-U(x)=V(y)-v(x,y,z) .
\end{equation}

 The matching is called \emph{stable} if there exist payoff functions  $U(x)$ and $V(y)$ such that


\begin{equation}\label{stabilitycond}
U(x)+V(y) \geq u(x,y,z) + v(x,y,z)
\end{equation}
for all $(x,y,z)$.


The condition \eqref{stabilitycond} ensures \emph{stability} of the matching  in the sense that no  pair of unmatched agents would both prefer to leave their current partners and match together.  If   \eqref{stabilitycond} failed, so that $U(x)+V(y) < u(x,y,z) + v(x,y,z)$ for some unmatched buyer-seller-good triple $(x,y,z)$ (that is, $(x,y,z) \notin spt(\gamma)$), then buyer $x$ and seller $y$ would be incentivized to exchange good $z$ for a price $p$ such that
$$
V(y)- v(x,y,z)< p < u(x,y,z)-U(x),
$$ 
resulting in\textit{ increased} payoffs $\bar U(x) := u(x,y,z)-p >U(x)$ and $\bar V(y) := v(x,y,z)+p >V(y)$  for both $x$ and $y$.

Finally, we turn our attention to purity of matchings.  There are several relevant notions of purity here, corresponding to various relationships between buyers, sellers, and goods.
\begin{definition} A matching $\gamma$ is called \emph{buyer-seller pure} if its projection $\gamma_{XY}$ onto $X \times Y$ is concentrated on a graph over $X$;  that is, if there exists a function $F_Y: X \rightarrow Y$ such that $\gamma_{XY}=(Id,F_Y)_\# \mu$.   We say $\gamma$ is \emph{buyer-good pure} if its projection $\gamma_{XZ}$ onto $X \times Z$ is concentrated on a graph over $X$;  that is, if there exists a function $F_Z: X \rightarrow Z$ such that $\gamma_{XZ}=(Id,F_Z)_\# \mu$.  

We will call $\gamma$ \emph{buyer-(seller, good) pure} (or simply \emph{pure}) if it is both buyer-seller and buyer-good pure, which means that $\gamma$ is concentrated on  a graph over $X$.  In other words, there exist functions $F_Y: X \rightarrow Y$ and $F_Z: X \rightarrow Z$ such that $\gamma=(Id,F_Y,F_Z)_\# \mu$.

\end{definition}
Note that one could analogously define several other notions of purity (seller-buyer, good-seller, etc).   The economic interpretation of, for instance, buyer-seller purity is that there is no randomness in each buyer $x$'s choices of the seller $y=F_Y(x)$ he works with; buyers of the same type will (almost) always buy goods from sellers of the same type.   

One of our main contributions in this paper is to identify a condition on the surplus that ensures full, buyer-(seller, good) purity; as we will see, the same condition will guarantee uniqueness of the stable matching as well.
\subsection{Variational interpretation}\label{sect: variational interpretation}

Much like the classical matching and hedonic problems, the problem of finding stable matchings in our setting has a variational formulation.  
Consider the problem of maximizing 
\begin{equation}\label{mm}
\int_{X \times Y \times Z} s(x,y,z) d\gamma(x,y,z)
\end{equation}
over the set $\Gamma_{XYZ}(\mu,\nu)$  of all matchings of $\mu$ and $\nu$ (that is, maximizing the total surplus of all agents).

\begin{theorem}\label{varformulation}
A matching $\gamma$ is an equilibrium if and only if it is optimal in \eqref{mm}.
\end{theorem}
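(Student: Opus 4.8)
The plan is to establish the equivalence via the linear programming duality structure shared by the optimization \eqref{mm} and the stability conditions, essentially recognizing that this is the Kantorovich duality theorem adapted to our three-variable setting.

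First I would prove that optimality implies equilibrium. Given an optimal matching $\gamma$, I would invoke Kantorovich duality for the maximization \eqref{mm}: since $\mu$ and $\nu$ are compactly supported and $s$ is Lipschitz (hence bounded and continuous on the compact set $X \times Y \times Z$), the dual problem of minimizing $\int_X U \, d\mu + \int_Y V \, d\nu$ over pairs $(U,V)$ with $U(x) + V(y) \geq s(x,y,z)$ for all $(x,y,z)$ attains its minimum, and strong duality holds. Note that the constraint $U(x) + V(y) \geq s(x,y,z)$ for \emph{all} $z$ is equivalent to $U(x) + V(y) \geq \bar s(x,y) := \sup_{z \in Z} s(x,y,z)$, so this is genuinely the classical two-marginal dual with cost $\bar s$. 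From strong duality, $\int_X U \, d\mu + \int_Y V \, d\nu = \int s \, d\gamma$, and combined with the pointwise inequality $U(x) + V(y) \geq s(x,y,z)$ integrated against $\gamma$, this forces $U(x) + V(y) = s(x,y,z)$ for $\gamma$-almost every $(x,y,z)$. Thus $U, V$ are payoff functions satisfying \eqref{stabilitycond}, so $\gamma$ is stable, i.e., an equilibrium.

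Conversely, suppose $\gamma$ is an equilibrium with payoff functions $U, V$ satisfying \eqref{stabilitycond} and the equality $U(x)+V(y) = s(x,y,z)$ $\gamma$-a.e. For any competing matching $\gamma' \in \Gamma_{XYZ}(\mu,\nu)$, integrating \eqref{stabilitycond} against $\gamma'$ and using that $\gamma'$ also has marginals $\mu$ and $\nu$ gives $\int s \, d\gamma' \leq \int_X U \, d\mu + \int_Y V \, d\nu = \int (U(x)+V(y)) \, d\gamma = \int s \, d\gamma$, where the last equality uses the $\gamma$-a.e. identity. Hence $\gamma$ maximizes \eqref{mm}. I should also remark that the set $\Gamma_{XYZ}(\mu,\nu)$ is nonempty (it contains, e.g., $\mu \times \nu \times \delta_{z_0}$ for any fixed $z_0 \in Z$) and weak-$*$ compact, and $s$ is continuous, so a maximizer exists; combined with the above, equilibria exist as well.

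The main obstacle is ensuring that Kantorovich duality applies cleanly despite the extra variable $z$: the point to verify carefully is that the three-marginal-looking problem \eqref{mm} collapses to an ordinary optimal transport problem for the reduced surplus $\bar s(x,y) = \sup_z s(x,y,z)$, which is continuous by the Lipschitz hypothesis on $s$ and compactness of $Z$, and that a maximizing $z$ can be selected measurably so that optimal plans for $\bar s$ lift to optimal matchings in $\Gamma_{XYZ}(\mu,\nu)$. Once this reduction is in hand, the argument is exactly the standard equivalence between optimal transport plans and stable matchings with transferable utility, and no further difficulty arises.
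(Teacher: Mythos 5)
Your argument is correct, and its skeleton is the same as the paper's: strong duality with dual attainment, then integration of the pointwise constraint against the optimal $\gamma$ to force $U(x)+V(y)=s(x,y,z)$ $\gamma$-almost everywhere (so $U,V$ are payoffs), and for the converse the elementary step of integrating \eqref{stabilitycond} against any competing matching. Where you genuinely differ is in how the duality is obtained. The paper proves a bespoke duality lemma directly for the three-variable problem, applying Fenchel--Rockafellar on $C(X\times Y\times Z)$, and establishes dual attainment by hand via the transforms $U^s,U^{ss}$, a Lipschitz estimate, and Arzel\`a--Ascoli. You instead collapse \eqref{mm} to the two-marginal problem for the reduced surplus $\bar s(x,y)=\sup_{z}s(x,y,z)$, noting that the constraint $U(x)+V(y)\ge s(x,y,z)$ for all $z$ is identical to $U(x)+V(y)\ge\bar s(x,y)$, and then quote classical Kantorovich duality with attainment for the continuous (indeed Lipschitz) cost $\bar s$ on compact spaces. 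Your route needs two supporting facts, which you correctly flag rather than gloss over: the identity $\sup_{\Gamma_{XYZ}(\mu,\nu)}\int s\,d\gamma=\sup_{\Gamma_{XY}(\mu,\nu)}\int\bar s\,d\sigma$, whose nontrivial half requires a Borel-measurable selection $\bar z(x,y)$ of the argmax (available by compactness and continuity; this is exactly the content of Proposition \ref{prop: equivalence to reduced matching} and the measurable-selection remark preceding it), and the continuity of $\bar s$. What your route buys is brevity: the two-marginal duality theorem is off the shelf, and the reduction to $\bar s$ is something the paper proves later anyway. What the paper's route buys is self-containedness at this stage -- no measurable selection and no reduced surplus are needed -- and the explicit construction of Lipschitz dual potentials is reused almost verbatim in the proof of Theorem \ref{purity thm}.
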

This result is well known in the classical matching case in \cite{GOZ}, when the surplus $s$ (and hence the matching measures $\gamma$ as well) depends only on $x$ and $y$.  For general hybrid surplus functions, $s(x,y,z)$, the result is proven in the discrete case in \cite{DGZ}.  The proof here requires no new ideas, but is included in an appendix in the interest of completeness.  

By standard arguments, Theorem \ref{varformulation} implies existence of a stable matching.

\begin{corollary}
There exists at least one stable matching $\gamma$.
\end{corollary}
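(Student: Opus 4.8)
The plan is to deduce the corollary directly from Theorem~\ref{varformulation}: since a matching is stable precisely when it maximizes \eqref{mm}, it suffices to show that the linear functional $\gamma \mapsto \int_{X\times Y\times Z} s(x,y,z)\, d\gamma(x,y,z)$ attains its maximum over $\Gamma_{XYZ}(\mu,\nu)$. This is the standard direct-method argument (compactness plus continuity) familiar from optimal transport.

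First I would observe that $\Gamma_{XYZ}(\mu,\nu)$ is nonempty --- for instance $\mu \otimes \nu \otimes \eta$ belongs to it for any Borel probability measure $\eta$ on $Z$ --- and that every element is a Borel probability measure supported in the fixed compact set $X \times Y \times Z \subseteq \mathbb{R}^{n_x}\times\mathbb{R}^{n_y}\times\mathbb{R}^{n_z}$. Hence this family is tight, so by Prokhorov's theorem it is sequentially precompact in the weak-$*$ topology. Moreover the marginal constraints are preserved under weak-$*$ limits: if $\gamma_k \rightharpoonup \gamma$ with $(\gamma_k)_X = \mu$ and $(\gamma_k)_Y = \nu$, then testing against $\phi \in C(X)$ and $\psi \in C(Y)$ (viewed as functions on $X \times Y \times Z$ depending on $x$ alone, resp.\ $y$ alone) gives $\gamma_X = \mu$ and $\gamma_Y = \nu$; thus $\Gamma_{XYZ}(\mu,\nu)$ is weak-$*$ compact.

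Next I would note that, since $u$ and $v$ are uniformly Lipschitz, $s = u+v$ is continuous, hence bounded on the compact set $X\times Y \times Z$; therefore $\gamma \mapsto \int s\, d\gamma$ is weak-$*$ continuous. A continuous functional on a nonempty weak-$*$ compact set attains its maximum, so there is an optimizer $\gamma^*$ for \eqref{mm}, and by Theorem~\ref{varformulation} $\gamma^*$ is a stable matching. There is no genuine obstacle in this argument --- the only point meriting a word of care is the weak-$*$ closedness of the marginal constraint noted above --- which is exactly why the excerpt records this as a corollary obtained ``by standard arguments''.
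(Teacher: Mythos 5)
Your argument is correct and follows essentially the same route as the paper: weak compactness of $\Gamma_{XYZ}(\mu,\nu)$ plus weak continuity of $\gamma \mapsto \int s\, d\gamma$, then Theorem \ref{varformulation} to convert the maximizer into a stable matching. The only cosmetic difference is that you obtain compactness via tightness and Prokhorov's theorem (and spell out nonemptiness and closedness of the marginal constraints), whereas the paper invokes Riesz--Markov and Banach--Alaoglu; both are standard and interchangeable here.
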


\begin{proof}
The proof is by continuity and compactness, and is completely standard.

Continuity of $s$ immediately implies continuity of 
$$
\gamma \mapsto \int_{X \times Y \times Z} s(x,y,z) d\gamma(x,y,z)
$$ with respect to weak convergence of measures.  The Riesz-Markov theorem identifies the dual of the set $C( X \times  Y \times  Z)$ of continuous functions on $X \times  Y \times  Z$ with the set  $\mathcal M(  X \times  Y \times  Z)$ of regular Borel measures on $X \times  Y \times  Z$, with norm given by total variation (which is total mass, for positive measures).  The Banach-Alaoglu theorem then asserts that the closed unit ball in $\mathcal M(X \times  Y \times  Z)$ is  compact.  The set $\Gamma_{XYZ}(\mu,\nu)$ is clearly a weakly closed subset of this unit ball, and therefore is itself compact.  The existence of a maximizer of \eqref{mm} over the set $\Gamma_{XYZ}(\mu,\nu)$, and hence a stable matching by Theorem \ref{varformulation}, then follows immediately.
\end{proof}
\subsection{Connection to tripartite matching}
Problem \eqref{mm} is closely related to a \textit{tripartite} matching problem (also known, in the mathematical literature, as the multi-marginal optimal transport problem), where in addition to prescribing the distributions of agents  $\mu$ on $X$, $\nu$ on $Y$, one  fixes the distribution $\alpha$ on $Z$ \footnote{In these tri-partitie matching problems, the variables are typically interpreted differently; economically, they model problems where three agents are required to form a match (think, for example, of firms hiring simultaneously both CEOS and CFOS, drawn from separate distributions).  The distributions of all three types of agents (firms, CEOs and CFOs)  are then known, and finding a stable match is equivalent to maximizing \eqref{classmm} (see Carlier and Ekeland \cite{ce}). }.   Finding a stable matching in this  problem is equivalent to the following maximization:
\begin{equation}\label{classmm}
T(\mu,\nu,\alpha):=\max_{\gamma \in \Gamma_{XYZ}(\mu,\nu,\alpha)}\int_{X \times Y \times Z}s(x,y,z)d\gamma(x,y,z)
\end{equation}
 where the maximum is over the set $\Gamma_{XYZ}(\mu,\nu,\alpha)$ of positive measures on $X \times Y \times Z$ whose marginals are $\mu, \nu$ and $\alpha$.   The underlying  relationship between tripartite matching and our present   problem is that  the variational problem \eqref{mm} (and therefore the equivalent hybrid matching-hedonic  problem) is equivalent to maximizing $T(\mu,\nu,\alpha)$ over the set of all probability measures $\alpha$ on $Z$.


There is a growing mathematical and economic literature on tripartite (or, more generally, multipartite) matching, which will be useful in what follows, as some of the results there can be translated to the present setting.   In particular, an immediate application is an upper bound on the dimension on the support of the stable matching, which is presented in the next section.
 In addition, conditions ensuring purity and uniqueness in \eqref{classmm} have been identified in \cite{KP2}.  In a subsequent section, we use this as a guide to develop a similar condition for problem \eqref{mm}; our condition here is somewhat weaker than the one in \cite{KP2}, as we do not require purity in \eqref{classmm} for \emph{every} choice of $\alpha$; we require it  only for those which \emph{maximize} $\alpha \mapsto T(\mu,\nu,\alpha)$.

\section{Dimension of the support of  matching measures}
Even when the conditions for purity and uniqueness developed in the next section fail, there are results known about the local structure of the optimizer in \eqref{classmm}; as any stable matching $\gamma$ maximizes \eqref{classmm}, taking $\alpha$ to be its $z$ marginal, these results immediately apply to stable matchings as well.

More specifically, for a $C^2$ surplus function, the theorem below provides a bound on the Hausdorff dimension of the support of $\gamma$ in terms of the off diagonal part of the Hessian of $s$.
Consider the symmetric $(n_x+n_y+n_z) \times (n_x+n_y+n_z) $ matrix
\begin{equation*} \qquad
G:=
\begin{bmatrix}
0 & D^2_{xy}s & D^2_{xz}s\\
 D^2_{yx}s& 0 &  D^2_{yz}s\\
 D^2_{zx}s&  D^2_{zy}s & 0\\
\end{bmatrix}
\end{equation*}
where the three diagonal $0$ blocks are $n_x \times n_x$, $n_y \times n_y$ and $n_z \times n_z$, respectively,  $D^2_{xy}s:=\Big ( \frac{\partial ^2 s}{\partial x_i \partial y_j} \Big)_{ij}$ is the $n_x \times n_y$ matrix of mixed second order partial derivatives with respect to the components of $x$ and $y$, and the other non-zero blocks are defined similarly.  

Recall that the \textit{signature} $(\lambda_+,\lambda_-,\lambda_0)$ of a symmetric $N \times N$ matrix is an ordered triple representing the numbers $\lambda_+, \lambda_-$ and $\lambda_0=N-\lambda_+-\lambda_-$  of positive,  negative and  zero eigenvalues, respectively.

\begin{theorem}\label{thm: local structure}
Assume that $s \in C^2(X \times Y \times Z)$ and that at some point $(x_0,y_0.z_0) \in X \times Y \times Z$, the signature of $G$ is $(\lambda_+, \lambda_-, n_x +n_y+n_z-\lambda_+-\lambda_-)$.  Then there is a neighbourhood $U$ of $(x_0,y_0,z_0)$ in $X \times Y \times Z$ such that $spt(\gamma) \cap U$ is contained in a Lipschitz submanifold of $X \times Y \times Z$ of dimension $n_x+n_y+n_z-\lambda_-$.   
\end{theorem}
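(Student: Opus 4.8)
The plan is to reduce the claim to the corresponding statement about maximizers of the tripartite problem \eqref{classmm}, and then to prove that statement by a second‑order analysis of $s$‑monotonicity. For the reduction: a stable matching $\gamma$ is, by Theorem \ref{varformulation}, a maximizer of \eqref{mm} over $\Gamma_{XYZ}(\mu,\nu)$; letting $\alpha:=(P_Z)_\#\gamma$ be its $z$‑marginal and noting that $\Gamma_{XYZ}(\mu,\nu,\alpha)\subseteq\Gamma_{XYZ}(\mu,\nu)$ with $\gamma\in\Gamma_{XYZ}(\mu,\nu,\alpha)$, it follows immediately that $\gamma$ also attains $T(\mu,\nu,\alpha)$ in \eqref{classmm}. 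Thus it suffices to prove the bound for an arbitrary maximizer $\gamma$ of \eqref{classmm}, which is the known local‑structure result for multi‑marginal optimal transport.

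For the latter I would argue as follows. Write $N:=n_x+n_y+n_z$ and $s(p_i):=s(x_i,y_i,z_i)$. A maximizer $\gamma$ of \eqref{classmm} is $s$‑cyclically monotone; specializing to pairs, for any $p_1=(x_1,y_1,z_1),\,p_2=(x_2,y_2,z_2)\in spt(\gamma)$ one has the three inequalities
\begin{equation*}
s(p_1)+s(p_2)\ \geq\ s(x_2,y_1,z_1)+s(x_1,y_2,z_2),
\end{equation*}
together with the two analogues obtained by exchanging the $y$‑coordinates (resp.\ the $z$‑coordinates) of the two triples instead of the $x$‑coordinates (these also follow from Kantorovich potentials $\phi\oplus\psi\oplus\chi\geq s$ that agree with $s$ on $spt(\gamma)$). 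Now fix the point $p_0=(x_0,y_0,z_0)$ from the statement and set $G:=G(p_0)$. Given $p_1,p_2\in spt(\gamma)$ in a small neighbourhood $U$ of $p_0$, put $v:=p_2-p_1=(a,b,c)$ and Taylor‑expand $s$ about $p_1$ to second order in each of the three inequalities (legitimate since $s\in C^2$); the constant and first‑order terms cancel in each inequality, and adding the three resulting second‑order inequalities yields exactly
\begin{equation*}
v^{T}G(p_1)\,v\ \geq\ -\,o(|v|^2),
\end{equation*}
with the $o(|v|^2)$ uniform in $p_1$ on a compact neighbourhood of $p_0$ because $\mathrm{Hess}\,s$ is uniformly continuous there. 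Using continuity of $G(\cdot)$ to replace $G(p_1)$ by $G$ and shrinking $U$, we conclude that every difference vector $v=p_2-p_1$ with $p_1,p_2\in spt(\gamma)\cap U$ satisfies $v^{T}Gv\geq -\varepsilon|v|^2$, with $\varepsilon>0$ as small as we please.

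It remains to convert this ``secant cone'' bound into a Lipschitz graph. Split $\mathbb{R}^{N}=L^+\oplus L^-$, where $L^-$ is a $\lambda_-$‑dimensional subspace on which $G\leq-\delta\,\mathrm{Id}$ for some $\delta>0$ and $L^+$ is a complementary subspace of dimension $N-\lambda_-$; write $v=v^++v^-$ accordingly. Expanding $v^{T}Gv$, the bound $v^{T}Gv\geq -\varepsilon|v|^2$ together with $(v^-)^{T}Gv^-\leq-\delta|v^-|^2$ and $\|G\|<\infty$ forces, once $\varepsilon$ is small relative to $\delta$, an estimate $|v^-|\leq C|v^+|$ (solve the resulting quadratic inequality in $|v^-|/|v^+|$). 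Hence the orthogonal projection onto $L^+$ is injective on $spt(\gamma)\cap U$ with $C$‑Lipschitz inverse on its image, so $spt(\gamma)\cap U$ lies in the graph of a Lipschitz map from a subset of $L^+$ into $L^-$; after a McShane--Kirszbraun extension this graph is a Lipschitz submanifold of dimension $N-\lambda_-=n_x+n_y+n_z-\lambda_-$, as claimed.

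The main obstacle is this last step: choosing $U$ and the threshold $\varepsilon$ with enough care that the implication ``$v^{T}Gv\geq -\varepsilon|v|^2\Rightarrow|v^-|\leq C|v^+|$'' genuinely holds throughout $U$, which requires tracking the interplay of the spectral gap of $G$ on $L^-$, the operator norm of $G$, and the modulus of continuity of $\mathrm{Hess}\,s$, and extracting one neighbourhood from the last of these. Everything else — the reduction to \eqref{classmm}, $s$‑cyclical monotonicity, and the Taylor expansion — is routine. Alternatively, one may bypass the self‑contained argument entirely and simply invoke the local‑structure theorem for multi‑marginal optimal transport from the literature, which is precisely the displayed bound applied to maximizers of \eqref{classmm}.
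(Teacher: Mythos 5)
Your proposal is correct, and its first step is exactly the paper's argument: since a stable matching maximizes \eqref{mm}, it lies in $\Gamma_{XYZ}(\mu,\nu,\alpha)\subseteq\Gamma_{XYZ}(\mu,\nu)$ for $\alpha=(P_Z)_\#\gamma$ and hence attains $T(\mu,\nu,\alpha)$ in \eqref{classmm}. Where you diverge is that the paper stops there and simply cites the local-structure theorem for multi-marginal optimal transport from \cite{P,P4}, whereas you reprove it: two-point $s$-monotonicity of the optimizer (or the contact set of Kantorovich potentials), second-order Taylor expansion in which the zeroth- and first-order terms cancel and the three exchange inequalities produce exactly the cross terms $a^TD^2_{xy}s\,b$, $a^TD^2_{xz}s\,c$, $b^TD^2_{yz}s\,c$, whose sum is $\tfrac12 v^TGv$, followed by the spectral splitting and a Kirszbraun/McShane extension. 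I checked the details: the uniform $o(|v|^2)$ bound from uniform continuity of $D^2s$ on a compact neighbourhood, the quadratic inequality in $|v^-|/|v^+|$, and the resulting Lipschitz graph over $L^+$ all go through; the step you flag as delicate is in fact harmless, and it becomes trivial if you take $L^\pm$ to be the orthogonal spectral subspaces of $G$, since then the cross term $(v^+)^TGv^-$ vanishes and $-\varepsilon|v|^2\le \|G\||v^+|^2-\delta|v^-|^2$ gives $|v^-|\le C|v^+|$ at once for $\varepsilon<\delta$. This is essentially the argument of the cited references, so what your version buys is self-containedness rather than a new route; a small additional simplification available in this setting is that the two-point exchange inequalities can be obtained directly from the paper's payoff functions $U,V$ (since $U(x)+V(y)\ge s(x,y,z)$ with equality on $spt(\gamma)$), avoiding any appeal to multi-marginal duality or $c$-cyclical monotonicity for \eqref{classmm}.
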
 
The same result is proven for optimizers of \eqref{classmm} in \cite{P, P4}, and that result immediately implies this one.   Note that the dimension $n_x+n_y+n_z-\lambda_-$ is the number of non-negative eigenvalues of $G$; in fact, if  $spt(\gamma)$ is a differentiable manifold at $(x_0,y_0,z_0)$, then $v^TGv \geq 0$ for any $v$ in the tangent space of $spt(\gamma)$ \cite{P,P4}.  It is worth noting that, unlike the purity results in the subsequent section, this theorem does not require any regularity assumptions on the marginals $\mu$ and $\nu$.  

The following proposition, also established in \cite{P4}, asserts that when the dimensions are all equal, the signature can be determined from the symmetric part of the product $ D^2_{zy}s[ D^2_{xy}s]^{-1} D^2_{xz}s$.
\begin{proposition}
 If $n_x=n_y=n_z=:n$, and the matrices  $ D^2_{xy}s,  D^2_{xz}s$ and $ D^2_{yz}s$ are all invertible, then the signature of $G$ is given by $(\lambda_+,\lambda_-, \lambda_0) = (n+r_-,n+r_+,n-r_--r_+)$ where $r_+$ (respectively $r_-$) is the number of positive (respectively negative) eigenvalues of the $n \times n$ symmetric matrix 
$$ D^2_{zy}s[ D^2_{xy}s]^{-1} D^2_{xz}s +D^2_{zx}s[ D^2_{yx}s]^{-1} D^2_{yz}s.$$
\end{proposition}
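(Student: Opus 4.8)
The plan is to determine the signature of $G$ by a short sequence of congruence transformations together with Sylvester's law of inertia, using only the invertibility of $A := D^2_{xy}s$. Writing also $B := D^2_{xz}s$ and $C := D^2_{yz}s$, so that (by equality of mixed partials) $D^2_{yx}s = A^T$, $D^2_{zx}s = B^T$ and $D^2_{zy}s = C^T$, we have
\begin{equation*}
G = \begin{bmatrix} 0 & A & B \\ A^T & 0 & C \\ B^T & C^T & 0 \end{bmatrix} = \begin{bmatrix} M & N \\ N^T & 0 \end{bmatrix}, \qquad M := \begin{bmatrix} 0 & A \\ A^T & 0 \end{bmatrix}, \qquad N := \begin{bmatrix} B \\ C \end{bmatrix},
\end{equation*}
where $M$ is the leading $2n \times 2n$ block and $N$ is $2n \times n$.

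First I would record the two facts about $M$ that are needed. Its eigenvalues are $\pm \sigma_1, \dots, \pm \sigma_n$, where the $\sigma_i$ are the singular values of $A$; since $A$ is invertible these are all nonzero, so the signature of $M$ is $(n, n, 0)$. Moreover $M$ is invertible, with $M^{-1} = \begin{bmatrix} 0 & (A^T)^{-1} \\ A^{-1} & 0 \end{bmatrix}$.

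Because $M$ is invertible, the Schur-complement congruence
\begin{equation*}
\begin{bmatrix} I & 0 \\ -N^T M^{-1} & I \end{bmatrix} G \begin{bmatrix} I & -M^{-1} N \\ 0 & I \end{bmatrix} = \begin{bmatrix} M & 0 \\ 0 & -N^T M^{-1} N \end{bmatrix}
\end{equation*}
exhibits $G$ as congruent to the block-diagonal matrix $\mathrm{diag}(M,\, -N^T M^{-1} N)$ — note that this step uses only the invertibility of $M$, never its definiteness. A direct multiplication with the formula for $M^{-1}$ then gives
\begin{equation*}
N^T M^{-1} N = B^T (A^T)^{-1} C + C^T A^{-1} B = D^2_{zy}s\,[D^2_{xy}s]^{-1}\,D^2_{xz}s + D^2_{zx}s\,[D^2_{yx}s]^{-1}\,D^2_{yz}s =: S,
\end{equation*}
which is precisely the symmetric matrix of the statement; it is visibly symmetric, being of the form $K + K^T$ with $K = C^T A^{-1} B = D^2_{zy}s\,[D^2_{xy}s]^{-1}\,D^2_{xz}s$.

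Finally I would assemble the signatures: since the signature of a block-diagonal symmetric matrix is the (componentwise) sum of the signatures of the blocks, and negating $S$ swaps its numbers of positive and negative eigenvalues while leaving its nullity unchanged,
\begin{equation*}
\mathrm{signature}(G) = \mathrm{signature}(M) + \mathrm{signature}(-S) = (n, n, 0) + (r_-, r_+, n - r_+ - r_-) = (n + r_-,\, n + r_+,\, n - r_+ - r_-),
\end{equation*}
which is the asserted identity. The argument is largely a bookkeeping exercise; the only delicate points are fixing the sign conventions in the Schur-complement congruence and tracking transposes so that the Schur complement $-N^T M^{-1} N$ is correctly identified with $-S$ rather than $-S^T$ or $S$. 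I would also note in passing that the hypotheses that $D^2_{xz}s$ and $D^2_{yz}s$ are invertible are not actually used in this proof (they would be relevant only if one chose to pivot instead on the $x$-$z$ or $y$-$z$ block); what matters is the invertibility of $D^2_{xy}s$, which is exactly what makes the stated formula, involving $[D^2_{xy}s]^{-1}$, meaningful.
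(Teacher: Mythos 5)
Your proof is correct. Note first that this paper does not actually prove the proposition; it is quoted from the cited reference [P4], so there is no in-paper argument to compare against. Your derivation stands on its own: the congruence $P^T G P=\mathrm{diag}\bigl(M,\,-N^TM^{-1}N\bigr)$ with $P=\begin{bmatrix} I & -M^{-1}N\\ 0 & I\end{bmatrix}$ is valid because $M^{-1}$ is symmetric (so the left factor really is $P^T$), the identification $N^TM^{-1}N=B^T(A^T)^{-1}C+C^TA^{-1}B$ is exactly the matrix $S$ of the statement, the inertia of $M=\begin{bmatrix}0&A\\A^T&0\end{bmatrix}$ is $(n,n,0)$ by the singular-value argument, and Sylvester's law plus additivity of inertia over blocks gives $(\lambda_+,\lambda_-,\lambda_0)=(n+r_-,\,n+r_+,\,n-r_+-r_-)$, which checks out on simple examples (e.g. $s=xy+xz+yz$ in one dimension gives inertia $(1,2,0)$). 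Your closing observation is also accurate and worth keeping: the congruence argument uses only the invertibility of $D^2_{xy}s$, which is what makes the formula meaningful; the invertibility of $D^2_{xz}s$ and $D^2_{yz}s$ plays no role in this computation (those hypotheses matter elsewhere in the cited theory, e.g. if one pivots on a different off-diagonal block or in the surrounding structural results), so your proof in fact establishes the statement under a weaker hypothesis than the one stated.
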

In particular, if $n_x=n_y=n_z=1$ and  $\frac{\partial^2s }{\partial z \partial y}[\frac{\partial^2s }{\partial x \partial y}]^{-1}\frac{\partial^2s }{\partial x \partial z}>0$, then $(r_+,r_-) =(1,0)$ in the proposition above and so the proposition together with Theorem \ref{thm: local structure} assert that any stable matching is concentrated on a $1$-dimensional Lipschitz submanifold; that is, a curve.  We will see later on that for an absolutely continuous $\mu$, the same condition ensures purity and uniqueness.  

In higher (but still equal) dimensions, the situation is more subtle.  For a bilinear surplus function $s(x,y,z) = x^TAy +x^TBz + y^TCz +f(x) +g(y)+h(z)$, as in Example \ref{bilinear_utilities} below, the condition  
$$
D^2_{zy}s[ D^2_{xy}s]^{-1} D^2_{xz}s +D^2_{zx}s[ D^2_{yx}s]^{-1} D^2_{yz}s=C^TA^{-1}B +B^T(A^T)^{-1}C>0,
$$ 
together with absolute continuity of $\mu$ implies purity (see Example \ref{bilinear_utilities}), but for more general forms of $s$,  one can have solutions which concentrate on $n$ dimensional sets but are not pure.  Consider, for example, the surplus on $\mathbb{R}^2 \times \mathbb{R}^2 \times \mathbb{R}^2$ from \cite{MomeniPass15}

$$
s(x,y,z)=e^{x^1+y^1}\cos(x^2-y^2)+e^{x^1+z^1}\cos(x^2-z^2)+e^{y^1+z^1}\cos(z^2-y^2)-e^{2x^1} -e^{2y^1}-e^{2z^1}.
$$
For this surplus, a straightforward calculation, found in \cite{MomeniPass15}, verifies that the product  $D^2_{zy}s[ D^2_{xy}s]^{-1} D^2_{xz}s +D^2_{zx}s[ D^2_{yx}s]^{-1} D^2_{yz}s$ is a scalar multiple of the identity, and the results above then imply that the signature of $G$ is $(2,4,0)$.  Every stable matching for this surplus therefore  concentrates on sets of no more than $2$ dimensions.  

However, stable matchings may not be pure.  It is straightforward to check that $s(x,y,z) \leq 0$ for all $(x,y,z)$, with equality on the set 
$$
(x,y,z) \in S=\{(x,y,z): x^1=y^1=z^1\text{ and } x^2-y^2 = 2h\pi, \text{ } x^2-z^2 =2l\pi\text{ for some integers }h,l \}.
$$
It follows that any $\gamma$ concentrated on $S$ is stable (we can take $U=V=0$ as the payoff functions)..
This set is two dimensional, as predicted by the calculations above, but not concentrated on a graph, and so the matching is not pure. 


\section{Conditions for purity and uniqueness}
We now turn our attention to the purity and uniqueness of stable matchings.  For the sake of comparison, we first recall known purity and uniqueness results for the simpler, strict matching and hedonic problems.  The \textit{twist}, or \textit{generalized Spence-Mirrlees}, condition plays a fundamental role in that setting:
\begin{definition}
Given a differentiable function, say $s(x,y)$, of two variables, we say $u$ is \emph{$x-y$ twisted} if for each $x \in X$, the mapping 
$$
y \mapsto D_xs(x,y)
$$
is injective.   Here, $D_x$ represents the gradient of $s$ with respect to the $x$ variable.

We will use the same terminology for functions of several variables, when all but one are held fixed.  That is, we say $s(x,y,z)$ is  \emph{$x-z$ twisted} if for each $x \in X, y \in Y$, the mapping 
$$
z \mapsto D_xs(x,y,z)
$$
is injective. 
\end{definition}

\subsection{Classical matching and hedonic problems}
We first review a purity result in the straight matching case, $s=s(x,y)$.

\begin{theorem}\textbf{(Matching problems)}\label{matching_purity}
Assume that $u=u(x,y)$ and $v=v(x,y)$ depend only on $x$ and $y$.  Assume that $\mu$ is absolutely continuous with respect to Lebesgue measure and $s(x,y) =u(x,y) +v(x,y)$  is $x-y$ twisted.  Then any stable matching  is buyer-seller pure and its projection $\gamma_{XY}$ onto $X \times Y$ is uniquely determined; that is, if $\gamma$ and $\bar \gamma$ are stable  matchings, $\gamma_{XY} =\bar\gamma_{XY}$.
\end{theorem}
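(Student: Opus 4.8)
The plan is to reduce the statement to the classical theory of two-marginal optimal transport with a twisted surplus. Since $s$ depends only on $(x,y)$, the objective in \eqref{mm} sees only the projection $\gamma_{XY}$, and every coupling of $\mu$ and $\nu$ on $X\times Y$ lifts to an element of $\Gamma_{XYZ}(\mu,\nu)$ (e.g.\ by taking a product with an arbitrary probability measure on $Z$). Hence, by Theorem \ref{varformulation}, $\gamma$ is a stable matching if and only if $\gamma_{XY}$ maximizes $\pi\mapsto\int_{X\times Y}s(x,y)\,d\pi$ over couplings of $\mu$ and $\nu$; so it suffices to show this two-marginal problem has a unique maximizer, and that it is concentrated on a graph over $X$.

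Next I would replace the payoff functions by Lipschitz ones. Starting from payoffs $U,V$ for a stable $\gamma$, set $\tilde V(y):=\sup_{x\in X}\big(s(x,y)-U(x)\big)$ and then $\tilde U(x):=\sup_{y\in Y}\big(s(x,y)-\tilde V(y)\big)$; since $s$ is uniformly Lipschitz, each is a supremum of uniformly Lipschitz functions, hence Lipschitz, and one checks directly that $\tilde U(x)+\tilde V(y)\ge s(x,y)$ for all $(x,y)$ while $\tilde U(x)+\tilde V(y)= s(x,y)$ for $\gamma_{XY}$-a.e.\ $(x,y)$. So we may assume $U,V$ Lipschitz. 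By Rademacher's theorem $U$ is differentiable Lebesgue-a.e., hence $\mu$-a.e.\ (as $\mu$ is absolutely continuous), and since $\mu$ charges no Lebesgue-null set, $\mu$-a.e.\ $x$ lies in the interior $X^0$ (recall $\partial X$ is smooth, so $\mu(\partial X)=0$). Combining with the fact that the projection $X\times Y\to X$ pushes $\gamma_{XY}$ to $\mu$, we get that for $\gamma_{XY}$-a.e.\ $(x,y)$: $x\in X^0$, $U$ is differentiable at $x$, $s(\cdot,y)$ is differentiable at $x$ (it is assumed differentiable in $x$), and $U(x)+V(y)=s(x,y)$.

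At such a point $(x,y)$, the inequality $U(x')+V(y)\ge s(x',y)$ holds for \emph{all} $x'$ with equality at $x'=x$, so $x'\mapsto U(x')-s(x',y)$ has an interior minimum at $x$; the first-order condition gives $D_xU(x)=D_xs(x,y)$. The twist hypothesis says $y\mapsto D_xs(x,y)$ is injective, so $y$ is the unique solution of $D_xs(x,\cdot)=D_xU(x)$; denote it $F_Y(x)$. Thus $\gamma_{XY}$ is concentrated on the graph of $F_Y$, i.e.\ $\gamma$ is buyer-seller pure. For uniqueness, if $\gamma,\bar\gamma$ are stable then $\tfrac12(\gamma_{XY}+\bar\gamma_{XY})$ is again a coupling of $\mu$ and $\nu$ and, by linearity of the objective, again optimal, hence (by what we just proved) concentrated on the graph of some $G:X\to Y$. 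Since $\gamma_{XY}\le 2\cdot\tfrac12(\gamma_{XY}+\bar\gamma_{XY})$, the measure $\gamma_{XY}$ is also concentrated on the graph of $G$ and has first marginal $\mu$, forcing $\gamma_{XY}=(Id,G)_\#\mu$; likewise $\bar\gamma_{XY}=(Id,G)_\#\mu$, so $\gamma_{XY}=\bar\gamma_{XY}$.

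The step I expect to be the main obstacle is the differentiability/first-order argument of the third paragraph — more precisely, the bookkeeping that shows Rademacher's theorem together with absolute continuity of $\mu$ actually puts the first-order condition $D_xU(x)=D_xs(x,y)$ in force on a set of full $\gamma_{XY}$-measure (this is where one uses that a $\mu$-null set of "bad" $x$ pulls back to a $\gamma_{XY}$-null set, and where one must make sure $x$ stays in the interior so the minimum is unconstrained). The Lipschitz regularization of the payoff functions in the second paragraph, while routine, is a necessary technical preliminary, since a priori the stability inequality \eqref{stabilitycond} gives no regularity on $U$ and $V$.
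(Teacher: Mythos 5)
Your argument is correct, and it is essentially the standard proof: the paper itself does not reprove this classical result (it cites \cite{CMN}), but your route --- reduce to the two-marginal problem, replace the payoffs by their Lipschitz $s$-transforms, apply Rademacher's theorem and the interior first-order condition to get $D_xU(x)=D_xs(x,y)$, invoke the twist to obtain a graph, and conclude uniqueness by averaging two optimizers --- is exactly the scheme the paper deploys in its appendix proof of Theorem \ref{purity thm} and the subsequent uniqueness corollary, specialized to $s=s(x,y)$. The only points left implicit (finiteness of the suprema defining $\tilde U,\tilde V$, which follows from compactness and the stability inequality, and measurability of $F_Y$) are routine and are glossed over in the paper as well.
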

This result is well known; a proof can be found in \cite{CMN}.  Indeed, in the mathematics literature, comparable results  regarding the equivalent optimal transport problem were established, in various levels of generality,  by Brenier\cite{bren}, Gangbo \cite{G}, Levin \cite{lev}, Gangbo-McCann\cite{gm} Caffarelli \cite{Caf}.

 Note that in our terminology, stable matchings are measures on $X \times Y \times Z$, whereas in the literature the strict matching problem is usually formulated instead  in terms of measures on $X \times Y$, as the good $z$ plays no role in the surplus function.  In our formulation, we would not have full uniqueness; any measure on $X \times Y \times Z$, whose projection onto $X \times Y$ is $\gamma_{XY}$ is a stable matching, as both agents $x$ and $y$ are indifferent to the superfluous good $z$.

  We now turn to the fully hedonic case, where agents' preferences $u=u(x,z)$ and $v=v(y,z)$ depend on goods but not on their partners.   

\begin{theorem}\textbf{(Hedonic problems)}
Assume that agents' preferences $u=u(x,z)$ and $v=v(y,z)$ depend only on $(x,z)$ and $(y,z)$, respectively, and that $\mu$ is absolutely continuous with respect to Lebesgue measure. Then:

\begin{enumerate}
\item If $u$ is $x-z$ twisted, the stable matching is buyer-good pure and it's projection $\gamma_{XZ}$ onto $X \times Z$ is uniquely determined.
\item If in addition, $v$ is $z-y$ twisted, and, for each fixed $x,y$, every maximum of the mapping $z \mapsto u(x,z) +v(y,z)$ over $Z$ occurs on the \emph{interior} of $Z$,  the stable matching measure is buyer-(seller,good) pure and unique.
\end{enumerate}

\end{theorem}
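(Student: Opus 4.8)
The plan is to adapt the classical optimal-transport argument (Brenier--Gangbo--McCann--Levin, as used for hedonic models in \cite{CMN}), exploiting the separate structures $u=u(x,z)$ and $v=v(y,z)$. Throughout, let $\gamma$ be a stable matching with payoff functions $U,V$, so $U(x)+V(y)\ge u(x,z)+v(y,z)$ for all $(x,y,z)$ with equality on $spt(\gamma)$. First I would run the usual double-transform reduction: replacing $V(y)$ by $\sup_{x,z}[u(x,z)+v(y,z)-U(x)]$, and then $U(x)$ by $\sup_{y,z}[u(x,z)+v(y,z)-V(y)]$, only decreases both functions, preserves the stability inequality, and preserves equality on $spt(\gamma)$; moreover the resulting $U$ is a supremum of functions Lipschitz in $x$ with constant at most the uniform Lipschitz constant of $u$, hence $U$ is Lipschitz. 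Since $\mu$ is absolutely continuous (so $\mu(\partial X)=0$, as $\partial X$ is a smooth hypersurface of measure zero), Rademacher's theorem gives differentiability of $U$ at $\mu$-a.e.\ $x$, and such $x$ lie in $X^0$; I would also assume $u,v\in C^1$ (or merely differentiable in each variable), the ``stronger regularity'' needed for the twist conditions to make sense.

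For part 1, I would fix $(x,y,z)\in spt(\gamma)$ with $U$ differentiable at $x\in X^0$. Subtracting the equality $U(x)+V(y)=u(x,z)+v(y,z)$ from the inequality $U(x')+V(y)\ge u(x',z)+v(y,z)$ shows $x'\mapsto U(x')-u(x',z)$ has an interior minimum at $x'=x$, so $DU(x)=D_xu(x,z)$. If $u$ is $x$--$z$ twisted, $z'\mapsto D_xu(x,z')$ is injective, so $z$ is the unique solution of $D_xu(x,z')=DU(x)$; hence $z$ is a function $F_Z(x)$ of $x$ alone and $\gamma_{XZ}=(\mathrm{Id},F_Z)_{\#}\mu$, i.e.\ $\gamma$ is buyer-good pure. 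Uniqueness of $\gamma_{XZ}$ then follows by the standard averaging trick: if $\gamma,\bar\gamma$ are stable they are both optimal in \eqref{mm} by Theorem \ref{varformulation}, so $\tfrac12(\gamma+\bar\gamma)$ is optimal, hence stable, hence buyer-good pure by what we just proved; disintegrating over $x$, the $x$-conditional of $\tfrac12(\gamma_{XZ}+\bar\gamma_{XZ})$ is a Dirac mass $\mu$-a.e., forcing the $x$-conditionals of $\gamma_{XZ}$ and $\bar\gamma_{XZ}$ to coincide, so $\gamma_{XZ}=\bar\gamma_{XZ}$.

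For part 2, I would add a first-order condition in $z$. For $(x,y,z)\in spt(\gamma)$, combining the equality with $U(x)+V(y)\ge u(x,z')+v(y,z')$ for all $z'$ shows $z$ maximizes $z'\mapsto u(x,z')+v(y,z')$ over $Z$; by hypothesis this maximizer is interior, so $D_zu(x,z)+D_zv(y,z)=0$. By part 1, for $\gamma$-a.e.\ such triple $z=F_Z(x)$, so $D_zv(y,F_Z(x))=-D_zu(x,F_Z(x))=:\psi(x)$ depends only on $x$. Since $v$ is $z$--$y$ twisted, $y'\mapsto D_zv(y',F_Z(x))$ is injective, so $y$ is the unique solution of this equation and is a function $F_Y(x)$ of $x$; thus $\gamma_{XY}=(\mathrm{Id},F_Y)_{\#}\mu$, and combined with part 1, $\gamma=(\mathrm{Id},F_Y,F_Z)_{\#}\mu$, i.e.\ $\gamma$ is buyer-(seller,good) pure. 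Uniqueness follows as in part 1: any stable $\bar\gamma$ averages with $\gamma$ to a stable, and hence pure, measure, and disintegrating over $x$ forces $\gamma=\bar\gamma$.

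I expect the main obstacles to be the two places where the hypotheses are genuinely used. The first is the regularity reduction at the start: one must verify that $U$ can be taken Lipschitz (so Rademacher applies) and finite, which is where the double-transform bookkeeping and the compactness of $X,Y,Z$ enter. The second, specific to part 2, is the hypothesis that the maxima of $z'\mapsto u(x,z')+v(y,z')$ are interior: without it the first-order condition $D_zu+D_zv=0$ can fail at a boundary maximizer, the $z$--$y$ twist of $v$ then yields no information, and $y$ need not be determined by $x$ --- so this is precisely the assumption that makes the buyer-seller purity step go through.
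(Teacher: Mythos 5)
Your proposal is correct and follows essentially the same route as the paper: for part 2 you derive, exactly as the paper does, the interior first-order condition $D_zu(x,F_Z(x))=-D_zv(y,F_Z(x))$ on the support and invoke the $z$--$y$ twist of $v$ to determine $y=F_Y(x)$. The only (harmless) differences are that you also write out the standard double-transform/envelope proof of part 1, which the paper simply cites from Ekeland, and you get uniqueness in part 2 by the averaging trick rather than by noting that $F_Y$ is determined by the already-unique $F_Z$.
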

The proof of part 1 can be found in \cite{ekeland2}, while the proof of the second assertion requires a minor additional argument. 

\begin{proof} \textit{(of assertion 2)}


Using part 1), we have the existence of a unique map $F_Z:X \rightarrow Z$ such that, for $\gamma$ almost every $(x,y,z)$, $z=F_Z(x)$.   Now, by a result in \cite{CMN}, we also have for $\gamma$ almost every $(x,y,z)$ that $z$ maximizes $z' \mapsto u(x,z') +v(y,z')$, so that 
$$
D_zu(x,z) =-D_zv(y,z)
$$
or
\begin{equation}\label{envelope for z}
D_zu(x,F_Z(x)) =-D_zv(y,F_Z(x)).
\end{equation}
The $z-y$ twist condition then ensures that there is only one $y$ satisfying this equation. That is, $y=:F_Y(x)$ is uniquely  determined by $x$ ; ie, the matching is pure.   Therefore, the stable matching $\gamma$ takes the form $\gamma=(Id, F_Y , F_Z)_\# \mu$, and as $F_Z$ is unique by part 1, and $F_Y$ is uniquely determined from \eqref{envelope for z} by $F_Z$, $\gamma$ is unique.

\end{proof}

\subsection{Fully mixed problems}

Our condition for purity and uniqueness will require a couple of definitions.  The first  is borrowed from \cite{KP2}.

\begin{definition}\textbf{(Splitting sets)}
For a fixed $x \in X$, a set $S_x \subset Y \times Z$ is a \emph{splitting set at $x$}  if there exist functions $V(y)$ and $W(z)$ such that 

$$
s(x,y,z) \leq V(y)+W(z)
$$
with equality whenever $(y,z) \in S_x$.
\end{definition}
The particular case when $W(z)=0$ in the above definition is especially relevant for this paper:
\begin{definition}\textbf{($z$-trivial splitting sets)}
For a fixed $x \in X$, a set $S_x \subset Y \times Z$ is a \emph{$z$-trivial splitting set at $x$}  if there exists a function $V(y)$ such that 

$$
s(x,y,z) \leq V(y)
$$
with equality whenever $(y,z) \in S_x$.
\end{definition}
It is clear that any $z$-trivial splitting set is a splitting set.
The role of $z$-trivial splitting sets in the matching problem \eqref{mm}  is  fairly transparent; for a given buyer $x$, the collection of all seller-contract pairs $(y,z)$ achieving equality in \eqref{stabilitycond}  (and hence potentially matching with $x$ in equilibrium) is a $z$-trivial splitting set at $x$.   As was observed in \cite{KP2}, splitting sets play a similar role in the tripartite  matching problem \eqref{classmm}.

\begin{remark}\label{maximality char}
It is worth noting that $S_x$ is a $z$-trivial splitting set at $x$ if and only if $\bar z$ maximizes  $z \mapsto s(x,\bar y,z)$ for each $(\bar y, \bar z) \in S_x$.
\end{remark}
\begin{definition}\textbf{(Twist on splitting sets)}
A differentiable surplus $s(x,y,z)$ is \emph{twisted on splitting sets} (or (TSS) for short), if whenever $S_x \subseteq  Y \times Z$ is a  splitting set at $x$ and $p \in \mathbb{R}^{n_x}$, there is at most one $(y,z) \in S_x$ such that 
$$
p= D_x s(x,y,z).
$$
\end{definition}
In \cite{KP2}, the (TSS) condition was shown to imply purity in the multi-agent matching model \eqref{classmm}.  Here, we introduce a variant, replacing splitting sets with $z$-trivial splitting sets, which will play an analagous role in \eqref{mm} and the related matching  problem.

\begin{definition}\textbf{(Twist on $z$-trivial splitting sets)}
A differentiable surplus $s(x,y,z)$ is \emph{twisted on $z$-trivial splitting sets} (or (TzSS) for short),  if whenever $S_x \subseteq  Y \times Z$ is a $z$-trivial splitting set at $x$ and $p \in \mathbb{R}^{n_x}$, there is at most one $(y,z) \in S_x$ such that
$$
p= D_x s(x,y,z).
$$
\end{definition}

We are now ready to state our main theoretical result on the purity of matchings.

\begin{theorem}\label{purity thm}
Suppose $s$ is twisted on $z$-trival splitting sets, and $\mu$ is absolutely continuous with respect to Lebesgue measure.  Then any stable matching $\gamma$ is pure.
\end{theorem}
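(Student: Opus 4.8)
The plan is to run the standard argument by which twist-type conditions force purity in optimal transport, with (TzSS) playing the role of the twist and with the observation that the fibres of $spt(\gamma)$ over $X$ are precisely $z$-trivial splitting sets. First I would replace the given payoff functions $U_0,V_0$ by a conjugate pair: set $U(x):=\sup_{(y,z)}\big(s(x,y,z)-V_0(y)\big)$ and then $V(y):=\sup_{(x,z)}\big(s(x,y,z)-U(x)\big)$. Because $s$ is uniformly Lipschitz, $U$ and $V$ are Lipschitz, and a routine check shows $U(x)+V(y)\ge s(x,y,z)$ for all $(x,y,z)$ while $U(x)+V(y)=s(x,y,z)$ on a set of full $\gamma$-measure; by continuity the equality then holds on all of $spt(\gamma)$.

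Next, fix $x_0\in X$ and let $S_{x_0}:=\{(y,z):(x_0,y,z)\in spt(\gamma)\}$. Taking $\hat V(y):=U(x_0)+V(y)$, stability gives $s(x_0,y,z)\le\hat V(y)$ for all $(y,z)$, with equality on $S_{x_0}$, so $S_{x_0}$ is a $z$-trivial splitting set at $x_0$. Now, since $U$ is Lipschitz and $\mu$ is absolutely continuous, $\mu$-a.e.\ $x_0$ lies in the interior $X^0$ (the smooth boundary $\partial X$ is Lebesgue-null) and is a differentiability point of $U$. For such an $x_0$ and any $(y_0,z_0)\in S_{x_0}$, the map $x\mapsto U(x)-s(x,y_0,z_0)$ is bounded below on $X$ by $-V(y_0)$, with equality at $x_0$; hence $x_0$ is an interior global minimum and $DU(x_0)=D_xs(x_0,y_0,z_0)$. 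Thus $D_xs(x_0,\cdot,\cdot)$ takes the single value $DU(x_0)$ on $S_{x_0}$, so by (TzSS) applied with $p=DU(x_0)$ the splitting set $S_{x_0}$ contains at most one point.

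To conclude, disintegrate $\gamma=\int_X\gamma_x\,d\mu(x)$ with respect to its first marginal $\mu$. Since $\gamma(spt(\gamma))=1=\int_X\gamma_x(S_x)\,d\mu(x)$, we get $\gamma_x(S_x)=1$ for $\mu$-a.e.\ $x$; combined with $|S_x|\le 1$ from the previous paragraph, $S_x$ is a single point $(F_Y(x),F_Z(x))$ and $\gamma_x=\delta_{(F_Y(x),F_Z(x))}$ for $\mu$-a.e.\ $x$, with $(F_Y,F_Z)$ $\mu$-measurable by the measurability of the disintegration (or a measurable-selection argument). Therefore $\gamma=(Id,F_Y,F_Z)_{\#}\mu$, i.e.\ $\gamma$ is pure.

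The whole argument is essentially the classical twist argument, so I do not expect a genuine obstacle; the two points needing care are (i) choosing payoff functions regular enough (Lipschitz, conjugate) that the equality $U+V=s$ holds on the \emph{closed} support, which is what lets absolute continuity of $\mu$ interact with the a.e.\ differentiability of $U$, and (ii) the routine disintegration/measurable-selection step. The one genuinely new ingredient, compared with the classical matching case, is the recognition in the second paragraph that each fibre of $spt(\gamma)$ is a $z$-trivial splitting set, which is exactly what makes (TzSS) the natural hypothesis.
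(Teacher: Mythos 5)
Your proof is correct and follows essentially the same route as the paper's: replace the payoffs by Lipschitz conjugates, observe that the fibres of the equality set are $z$-trivial splitting sets, apply Rademacher plus the envelope argument to get $DU(x)=D_xs(x,y,z)$ on the fibre, and invoke (TzSS) to make each fibre a singleton. The only differences are cosmetic — the paper conjugates just one payoff ($V^s$) and works with the full equality set rather than $spt(\gamma)$, leaving the disintegration/measurable-selection step implicit, which you spell out.
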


The proof of this result is fairly standard; it involves applying the envelope theorem with respect to $x$ on the equality set in \eqref{stabilitycond} to equate the gradients of $U$ and $s$ (with respect to $x$), and then using the (TzSS) condition to infer the resulting equation can have only one solution.  There is one technical difficulty, which is also standard in problems of this type;  the payoff function $U(x)$  may not be differentiable.  We use a (well known) convexification trick to get around this, replacing $U$ with a Lipschitz (and hence differentiable almost everywhere, by Rademacher's theorem) payoff function, $\bar U$.  The proof can be found in the appendix.

A standard argument now implies uniqueness of the stable matching.
\begin{corollary}
Under the conditions in the preceding Theorem, the stable matching is unique.
\end{corollary}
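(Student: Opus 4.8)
The plan is to use the standard convexity-of-the-feasible-set argument, exactly the one sketched (and then commented out) in the proof of assertion 2 of the hedonic theorem above. Suppose $\gamma$ and $\bar\gamma$ are both stable matchings. By Theorem \ref{purity thm}, each is pure: there exist maps $F_Y,F_Z$ and $\bar F_Y,\bar F_Z$ with $\gamma=(Id,F_Y,F_Z)_\#\mu$ and $\bar\gamma=(Id,\bar F_Y,\bar F_Z)_\#\mu$. By Theorem \ref{varformulation}, both $\gamma$ and $\bar\gamma$ are optimal in \eqref{mm}.

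Next I would exploit linearity and convexity. The functional $\gamma\mapsto\int s\,d\gamma$ is linear and $\Gamma_{XYZ}(\mu,\nu)$ is a convex set, so the midpoint $\eta:=\tfrac12(\gamma+\bar\gamma)$ lies in $\Gamma_{XYZ}(\mu,\nu)$ and attains the same (maximal) value of \eqref{mm}; hence $\eta$ is optimal, and therefore a stable matching by Theorem \ref{varformulation} again. Applying Theorem \ref{purity thm} once more, $\eta$ must itself be pure, i.e.\ concentrated on a graph over $X$.

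The final step is to observe that $\eta$ is concentrated on the union of the two graphs $\{(x,F_Y(x),F_Z(x))\}$ and $\{(x,\bar F_Y(x),\bar F_Z(x))\}$, with mass $\tfrac12$ on each (relative to the pushforward of $\mu$). For $\eta$ to be concentrated on a single graph over $X$, these two graphs must agree $\mu$-almost everywhere: indeed, on the set $\{x: (F_Y(x),F_Z(x))\neq(\bar F_Y(x),\bar F_Z(x))\}$, $\eta$ would assign positive mass to two distinct points above that $x$, contradicting purity unless the set is $\mu$-null. Hence $(F_Y,F_Z)=(\bar F_Y,\bar F_Z)$ $\mu$-a.e., so $\gamma=\bar\gamma$.

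I do not anticipate a serious obstacle here; the one point requiring a little care is the measure-theoretic bookkeeping in the last step — making precise that a pure (graph-concentrated) measure cannot dominate a half-and-half average of two maps that differ on a set of positive $\mu$-measure. This is routine: push the statement back to $X$ via the projection $P_X$, noting $(P_X)_\#\eta=\mu$, and use that $\eta=(Id,G)_\#\mu$ for some measurable $G=(F_Y,F_Z)$ forces $G$ to coincide $\mu$-a.e.\ with both $(F_Y,F_Z)$ and $(\bar F_Y,\bar F_Z)$. Everything else is a direct transcription of the argument already displayed in the commented-out block above.
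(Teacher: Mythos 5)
Your argument is correct and is essentially identical to the paper's own proof: both use purity of $\gamma$, $\bar\gamma$ and of the midpoint $\tfrac12(\gamma+\bar\gamma)$ (optimal by linearity, hence stable by Theorem \ref{varformulation}, hence pure by Theorem \ref{purity thm}) to force the two graphs to coincide $\mu$-a.e. The measure-theoretic bookkeeping you flag is handled the same way implicitly in the paper, so there is nothing further to add.
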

\begin{proof}
Suppose $\gamma$ and $\bar \gamma$  are stable matchings; by Theorem \ref{purity thm}, we know that both $\gamma$ and $\bar \gamma$ are pure,  $\gamma =(Id, F)_{\#} \mu$ and  $\bar \gamma =(Id , \bar F)_{\#}  \mu$ for $F, \bar F:X \mapsto Y \times Z$, and, by Theorem \ref{varformulation}, both are also maximizers of \eqref{mm}.   It is then easy to see that $\gamma_{1/2} :=\frac{1}{2}[\gamma +\bar \gamma] \in \Gamma_{XYZ}(\mu,\nu)$.  It is therefore  also optimal in \eqref{mm}, as the functional is linear.  By Theorem \ref{purity thm} again, $\gamma_{1/2}$ too must then be supported on the graph of some function; on the other hand, it is clear that it is supported on the \emph{union} of the  graphs of $F$ and $\bar F$, which then implies that $F(x) =\bar F(x)$ almost everywhere, and so $\gamma =\bar \gamma$, yielding uniqueness. 
\end{proof}
As any $z$-trivial splitting set is a splitting set (one needs only to take $W(z) =0$ in the definition),  any surplus which is twisted on splitting sets is twisted on $z$-trivial splitting sets.  Therefore, we also have the following Corollary:

 \begin{corollary}\label{purity_uniqueness}
Suppose $s$ is twisted on  splitting sets, and $\mu$ is absolutely continuous with respect to Lebesgue measure.  Then the stable matching $\gamma$ is unique and pure.
\end{corollary}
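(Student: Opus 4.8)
The plan is to deduce Corollary~\ref{purity_uniqueness} directly from the already-established Theorem~\ref{purity thm} and its Corollary, using only the set-theoretic observation that every splitting set is a $z$-trivial splitting set up to a harmless sign change in the auxiliary functions. Concretely, I would first observe that the definitions of (TSS) and (TzSS) quantify over \emph{families} of sets: (TSS) asks the twist conclusion to hold for every splitting set $S_x$, while (TzSS) asks it only for every $z$-trivial splitting set. Since the collection of $z$-trivial splitting sets at $x$ is contained in the collection of splitting sets at $x$ — indeed, if $S_x$ is $z$-trivial with witness $V(y)$, then taking $W(z)\equiv 0$ shows $s(x,y,z)\le V(y)+W(z)$ with equality on $S_x$, so $S_x$ is a splitting set with witnesses $(V,W)$ — any hypothesis imposed on all splitting sets is \emph{a fortiori} imposed on all $z$-trivial splitting sets. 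Hence (TSS) implies (TzSS).

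With that implication in hand, the corollary is immediate: assuming $s$ is twisted on splitting sets, it is in particular twisted on $z$-trivial splitting sets, so Theorem~\ref{purity thm} applies and gives that any stable matching $\gamma$ is pure, and the preceding Corollary (uniqueness under (TzSS) and absolute continuity of $\mu$) then gives uniqueness of $\gamma$. So the proof is essentially one sentence: ``(TSS) $\Rightarrow$ (TzSS), now invoke Theorem~\ref{purity thm} and its Corollary.''

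There is no real obstacle here; the only thing to be careful about is to make sure the logical direction of the containment is stated correctly — it is the \emph{class} of $z$-trivial splitting sets that shrinks, so the twist \emph{condition} (a universally quantified statement over that class) \emph{weakens}, which is why (TSS) is the stronger assumption and (TzSS) the weaker. The excerpt has in fact already recorded the key fact (``It is clear that any $z$-trivial splitting set is a splitting set'' and ``any surplus which is twisted on splitting sets is twisted on $z$-trivial splitting sets''), so the write-up just needs to cite that and chain the two previous results; I would keep the proof to two or three lines and not reprove purity or uniqueness from scratch.

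\begin{proof}
By definition, if $S_x\subseteq Y\times Z$ is a $z$-trivial splitting set at $x$, with associated function $V(y)$ satisfying $s(x,y,z)\le V(y)$ with equality on $S_x$, then setting $W(z)\equiv 0$ shows $s(x,y,z)\le V(y)+W(z)$ with equality on $S_x$; hence $S_x$ is also a splitting set at $x$. Consequently, if $s$ is twisted on splitting sets, then for every splitting set $S_x$ — and in particular for every $z$-trivial splitting set $S_x$ — and every $p\in\mathbb{R}^{n_x}$, there is at most one $(y,z)\in S_x$ with $p=D_xs(x,y,z)$; that is, $s$ is twisted on $z$-trivial splitting sets. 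The conclusion now follows from Theorem~\ref{purity thm} and the preceding Corollary, applied with the (TzSS) hypothesis in force.
\end{proof}
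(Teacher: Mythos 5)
Your proof is correct and is essentially the paper's own argument: the paper derives this corollary exactly by noting that any $z$-trivial splitting set is a splitting set (take $W(z)\equiv 0$), so twistedness on splitting sets implies twistedness on $z$-trivial splitting sets, and then invoking Theorem~\ref{purity thm} together with the uniqueness corollary. Your care about the direction of the containment (the class of sets shrinks, so the universally quantified condition weakens) is exactly right and matches the paper's reasoning.
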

The preceding Corollary is potentially useful, as several examples of surplus functions satisfying the twist on splitting sets condition are known, as well as general sufficient differential conditions ensuring it \cite{KP2}\cite{Pass14}.  Some of these will be discussed in Section \ref{sect: examples} below.

\section{Reformulation as a bipartite matching problem}
Here we provide a different, but equivalent, formulation of the problem, following Chiappori, McCann and Nesheim \cite{CMN}, as a binary matching, or two marginal optimal transport, problem. We define the \emph{reduced} surplus by:
\begin{equation}\label{redsur}
\bar s(x,y) =\max_{z \in Z} [u(x,y,z) +v(x,y,z)].
\end{equation}

The meaning of $\bar s(x,y)$ is clear; it expresses the maximum joint surplus (among all possible contracts) that can be generated by the partnership of $x$ and $y$.
The classical two marginal optimal transport problem is to maximize 

\begin{equation}\label{twomarg}
\int_{X \times Y} \bar s(x,y) d\sigma(x,y)
\end{equation}
over the set $\Gamma_{XY}(\mu,\nu)$ of probability measures on $X \times Y$  with $X$ (respectively $Y$) marginal $\mu$ (respectively $\nu$). 
This optimization problem is  equivalent to the classical strict stable matching problem under transferable utility, with surplus $\bar s$ \cite{GOZ}\cite{CMN}.

For each $x,y$, choose $\bar z(x,y) \in \argmax_z [u(x,y,z) +v(x,y,z)]$; note that $\bar z$ then defines a function $\bar z: X \times Y \rightarrow Z$.  Due to compactness, one can choose this  selection to be Borel measurable.

\begin{proposition}\label{prop: equivalence to reduced matching}
Suppose a measure $\sigma$ is optimal in \eqref{twomarg}.  Then $(Id, Id, \bar z)_{\#}\sigma$ is optimal for \eqref{mm}.  Conversely, if $\gamma$ is optimal in \eqref{mm}, then $\gamma_{XY}=(P_{XY})_{\#}\gamma$ is optimal in \eqref{twomarg}, where $P_{XY}(x,y,z) =(x,y)$.
\end{proposition}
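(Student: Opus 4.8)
The plan is to prove both directions at once by first identifying the optimal values of \eqref{twomarg} and \eqref{mm}, and then reading off the two claims. The only two facts about $\bar s$ that are needed are the pointwise bound $s(x,y,z)\le \bar s(x,y)$ for all $(x,y,z)\in X\times Y\times Z$, immediate from \eqref{redsur}, and the complementary identity $s(x,y,\bar z(x,y))=\bar s(x,y)$, which holds exactly because $\bar z(x,y)\in\argmax_z[u(x,y,z)+v(x,y,z)]$.

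First I would prove $\max_{\gamma\in\Gamma_{XYZ}(\mu,\nu)}\int s\,d\gamma \le \max_{\sigma\in\Gamma_{XY}(\mu,\nu)}\int\bar s\,d\sigma$. For any matching $\gamma$, the pointwise bound gives $\int_{X\times Y\times Z}s\,d\gamma\le\int_{X\times Y\times Z}\bar s(x,y)\,d\gamma=\int_{X\times Y}\bar s\,d\gamma_{XY}$, the last equality holding because the integrand depends only on $(x,y)$ and $\gamma_{XY}=(P_{XY})_\#\gamma$. Since $\gamma$ is a matching, $\gamma_{XY}\in\Gamma_{XY}(\mu,\nu)$, so the right-hand side is at most the optimal value of \eqref{twomarg}, which proves the inequality.

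Next I would prove the reverse inequality using the map $(x,y)\mapsto(x,y,\bar z(x,y))$. Given $\sigma\in\Gamma_{XY}(\mu,\nu)$, set $\gamma_\sigma:=(Id,Id,\bar z)_\#\sigma$; since this map fixes the first two coordinates, $\gamma_\sigma$ has first marginal $\mu$ and second marginal $\nu$, hence $\gamma_\sigma\in\Gamma_{XYZ}(\mu,\nu)$. By the change-of-variables formula together with $s(x,y,\bar z(x,y))=\bar s(x,y)$, we get $\int s\,d\gamma_\sigma=\int_{X\times Y}s(x,y,\bar z(x,y))\,d\sigma=\int_{X\times Y}\bar s\,d\sigma$. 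Taking $\sigma$ optimal in \eqref{twomarg} shows the optimal value of \eqref{mm} is at least that of \eqref{twomarg}; combined with the previous paragraph the two optimal values coincide, and the same computation shows $\gamma_\sigma$ attains it, which is the first assertion. For the converse, if $\gamma$ is optimal in \eqref{mm} then the chain of inequalities in the previous paragraph must all be equalities, so $\int_{X\times Y}\bar s\,d\gamma_{XY}$ equals the common optimal value; since $\gamma_{XY}\in\Gamma_{XY}(\mu,\nu)$, this says $\gamma_{XY}$ is optimal in \eqref{twomarg}.

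This is a soft comparison of variational values, so there is no real analytic obstacle; the only points needing a little care — both already handled in the text — are that $\bar z$ can be taken Borel measurable (so that $\gamma_\sigma$ is a genuine Borel measure and the change of variables is legitimate) and that $\bar s$ is Borel measurable, which holds because it is the maximum over the compact set $Z$ of the continuous function $s$ (indeed $\bar s$ is then continuous), so that every integral above is well defined.
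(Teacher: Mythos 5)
Your proposal is correct and takes essentially the same route as the paper: it rests on the same two correspondences $\gamma\mapsto\gamma_{XY}=(P_{XY})_\#\gamma$ and $\sigma\mapsto(Id,Id,\bar z)_\#\sigma$ together with the pointwise bound $s(x,y,z)\le\bar s(x,y)$ with equality at $z=\bar z(x,y)$; the paper simply writes these comparisons as two direct chains of inequalities against an arbitrary competitor, whereas you first equate the optimal values of \eqref{mm} and \eqref{twomarg} and then read off both assertions, which is only a reorganization of the same computation (with the minor remark that your ``taking $\sigma$ optimal'' step can equally be phrased as taking a supremum over $\sigma$, so no existence result is needed there).
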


The proof of this result is almost identical to the proof of the analagous result in \cite{CMN} and can be found in the appendix.

As the well known Spence-Mirlees condition on $\bar s$ is known to imply purity and uniqueness of  maximizers in \eqref{twomarg}, it is natural to look for conditions on $s$ which ensure it.  We show that the twist on $z$-trivial sets for $s$ is equivalent to the classical generalized Spence-Mirrlees condition on $\bar s$ (under an extra condition on $s$). Note that this, combined with the preceding proposition and Theorem \ref{matching_purity} yields an alternative proof of the buyer-seller aspects of the purity and uniqueness results in the last section (that is, buyer-seller purity and uniqueness of  $\gamma_{XY}$).


\begin{theorem}\label{equivalence to classcial twist}
Assume that both $s$ and $\bar s$ are everywhere differentiable with respect to $x$. If $s$ satisfies the twist  on $z$-trivial splitting sets condition, then $\bar s$ satisfies the twist condition.  

 Conversely, assume that $s$ is $x-z$ twisted.  Then, if $\bar s$ satisfies the twist condition,  $s$ satisfies the twist on $z$-trivial splitting sets condition.
\end{theorem}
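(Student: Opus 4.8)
The plan is to relate the gradient $D_x \bar s(x,y)$ directly to $D_x s(x,y,z)$ evaluated at a maximizing $z$, via an envelope-type identity, and then pass back and forth between the two twist conditions by keeping careful track of which $(y,z)$ pairs can occur. First I would observe the basic envelope relation: if $\bar s(x,y) = \max_z [u(x,y,z)+v(x,y,z)] = s(x,y,\bar z)$ for some $\bar z = \bar z(x,y)$ lying in the region where $s$ is differentiable in $x$, then since $\bar s(x,y) \ge s(x,y,\bar z(x_0,y))$ for all $x$ near $x_0$ with equality at $x=x_0$, and $\bar s$ is assumed differentiable in $x$, one gets $D_x \bar s(x,y) = D_x s(x,y,\bar z(x,y))$ whenever $\bar z(x,y)$ realizes the max. (This uses only the inequality $\bar s \ge s(\cdot,\cdot,z)$ for every fixed $z$ together with differentiability of both sides — no interior-maximum hypothesis is needed for this direction.) Crucially, by Remark \ref{maximality char}, the set $S_x = \{(y,z) : z \text{ maximizes } z'\mapsto s(x,y,z')\}$ is itself a $z$-trivial splitting set at $x$ (take $V(y) = \bar s(x,y)$), in fact the largest one, and every $z$-trivial splitting set at $x$ is contained in it.

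For the first implication, suppose $s$ is (TzSS) and that $\bar s(x,y_1) $ and $\bar s(x,y_2)$ have the same $x$-gradient, $p := D_x\bar s(x,y_1) = D_x\bar s(x,y_2)$. Pick maximizers $z_i \in \argmax_z s(x,y_i,z)$, so $(y_i,z_i) \in S_x$, the maximal $z$-trivial splitting set at $x$. By the envelope identity, $D_x s(x,y_i,z_i) = D_x \bar s(x,y_i) = p$ for $i=1,2$. Since $(y_1,z_1),(y_2,z_2) \in S_x$ and both map to $p$ under $D_x s(x,\cdot,\cdot)$, the (TzSS) condition forces $(y_1,z_1) = (y_2,z_2)$, in particular $y_1 = y_2$; hence $y \mapsto D_x\bar s(x,y)$ is injective, which is the twist condition on $\bar s$.

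For the converse, assume $s$ is $x$–$z$ twisted and $\bar s$ is twisted. Let $S_x$ be any $z$-trivial splitting set at $x$, and suppose $(y_1,z_1), (y_2,z_2) \in S_x$ both satisfy $D_x s(x,y_i,z_i) = p$. Since $S_x \subseteq \{(y,z): z \text{ maximizes } s(x,y,\cdot)\}$, each $z_i$ maximizes $s(x,y_i,\cdot)$, so $\bar s(x,y_i) = s(x,y_i,z_i)$ and the envelope identity gives $D_x\bar s(x,y_i) = D_x s(x,y_i,z_i) = p$. The twist condition on $\bar s$ then yields $y_1 = y_2 =: y$. Now $z_1$ and $z_2$ both maximize $z \mapsto s(x,y,z)$ and both satisfy $D_x s(x,y,z_i) = p$; the $x$–$z$ twist of $s$ (injectivity of $z \mapsto D_x s(x,y,z)$ for fixed $x,y$) forces $z_1 = z_2$. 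Hence $(y_1,z_1) = (y_2,z_2)$, establishing (TzSS).

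The main obstacle I anticipate is making the envelope identity rigorous: it requires knowing that the selection $\bar z(x,y)$ (or at least some maximizer) can be chosen so that $s$ is differentiable in $x$ at $(x,y,\bar z(x,y))$, and that the one-sided inequalities genuinely pin down the full gradient rather than merely a subgradient inclusion — the hypothesis that $\bar s$ is everywhere differentiable in $x$ is exactly what rescues this, since $\bar s(x') - \bar s(x) \ge s(x',y,\bar z(x,y)) - s(x,y,\bar z(x,y))$ squeezes $D_x\bar s(x,y)$ between the upper and lower Dini derivatives of $x' \mapsto s(x',y,\bar z(x,y))$, forcing the latter to be differentiable there with the same gradient. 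The remaining steps are bookkeeping about containment of splitting sets, which Remark \ref{maximality char} makes routine.
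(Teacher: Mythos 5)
Your proposal is correct and follows essentially the same route as the paper's own proof: both directions hinge on the envelope identity $D_x\bar s(x,y)=D_x s(x,y,z)$ at a maximizing $z$, combined with the observation (Remark \ref{maximality char}) that points of a $z$-trivial splitting set are $z$-maximizers, after which the (TzSS) condition, the twist on $\bar s$, and the $x$--$z$ twist are applied exactly as in the appendix. Your extra care about the envelope identity is sound but not needed beyond the stated everywhere-differentiability hypotheses on $s$ and $\bar s$.
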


The proof is relegated to an appendix. 

\begin{remark}
From inspection of the proof, it is clear that in fact slightly more is true.

If we assume that $\bar s$ is twisted, but remove the $x-z$ twist assumption on s, the argument in the proof of the second implication still yields that if $(y_0,z_0)$ and $(y_1,z_1)$ are in any splitting set $S_x$ at $x$, and $D_xs(x,y_0,z_0)= D_xs(x,y_1,z_1)$ then $y_0=y_1$ (although possibly $z_0 \neq z_1$). This then implies that any stable matching is buyer-seller pure, and that its projection onto $X \times Y$ is uniquely determined (as can also be proven using the twistedness of $\bar s$ in combination with Theorem \ref{matching_purity}).
\end{remark}

\begin{remark}
If $s$ is either Lipschitz or semi-convex, one can show by a standard argument that $\bar s$ is also Lipschitz or semi-convex, respectively, and it is well known that functions satisfying either one of these criteria are differentiable almost everywhere.  In fact, a version of the twist condition implying purity and uniqueness can be formulated under either of these assumptions (in place of everywhere differentiability) \cite{CMN}, and our proof in the appendix adapts easily to this setting.  It follows that one can remove the assumption of differentiability on $\bar s$ in the preceding theorem; we present the version with the differentiability assumption on $\bar s$  here for simplicity.
\end{remark}





\section{Examples}\label{sect: examples}

While the twist on $z$-trivial splitting sets condition looks complicated, it is possible to verify it on several classes of examples.  We present here three types of examples:
\begin{enumerate}
\item Examples satisfying the more restrictive twist on splitting sets condition (and hence the twist on $z$-trivial splitting sets condition introduced here as well). A wide variety of examples of this type are already known in the mathematical literature.
\item Examples violating the twist on splitting sets condition, but satisfying  twist on  $z$-trivial splitting sets.
\item An example violating twist on $z$-trivial splitting sets, together with an explicit non-pure stable matching.
\end{enumerate}


\subsection{Surpluses satisfying twist on splitting sets}
As mentioned above, a variety of examples satisfying the twist on splitting sets condition (and therefore also the twist on $z$-trivial splitting sets condition), as well as general differential conditions on $s$ which imply them, are known \cite{KP2}.  As the differential conditions are somewhat complicated, we do not state them here; instead, we present a couple of examples which seem potentially relevant in economics
\begin{example}\textbf{(One dimensional problems)}\label{1-d examples}

 Suppose $X,Y,Z \subset \mathbb{R}$ are all real intervals.  Then $s$ is twisted on splitting sets provided the \emph{compatibility} condition, $\frac{\partial^2s }{\partial x \partial y}[\frac{\partial^2s }{\partial z \partial y}]^{-1}\frac{\partial^2s }{\partial z \partial x}>0$, holds  for all $(x,y,z) \in X \times Y \times Z$.  In particular, this holds when $s$ is supermodular in each pair of its arguments.
\end{example}
The next example is similar to the Tinbergen model \cite{Tinbergen56}, augmented to include direct buyer-seller interactions.  

\begin{example}\textbf{(Bilinear utilities)}\label{bilinear_utilities}

  Suppose $X,Y,Z \subset \mathbb{R}^n$ are convex and 
$$
s(x,y,z) = x^TAy +x^TBz + y^TCz +f(x) +g(y)+h(z)
$$ 
for nonsingular $n \times n$ matrices $A,B$ and $C$.  Then $s$ is twisted on splitting sets provided the symmetric matrix $C^TA^{-1}B +B^T(A^T)^{-1}C$ is positive definite.
\end{example}
Note that the positive definiteness assumption on $C^TA^{-1}B +B^T(A^T)^{-1}C$ forces each of the matrices $A,B$ and $C$ to be invertible.
Proofs of the (TSS) property for the surplus functions in both of the examples in this subsection can be found in \cite{KP2}.

\begin{remark}  As we will see below, the sufficient conditions for purity and uniqueness considered here  (twist on  splitting sets) are substantially  stronger than the twist on $z$-trivial splitting sets, and so, when studying purity and uniqueness in the hybrid matching-hedonic model, the motivation for considering the multi-marginal coupling between buyers, sellers and (prescribed) goods and the related twist on splitting sets condition may seem questionable.  However, there are at least two concrete advantages to doing so.  

First, the twist on splitting sets condition is often easier to check.  For instance, in one dimension, the compatibility condition in Example \ref{1-d examples} is essentially equivalent to twist on splitting sets and hence is an easy  to check sufficient condition for twist on $z$-trivial splitting sets; if compatibility fails, twist on $z$-trivial splitting sets may  still hold, but establishing this typically requires more delicate arguments.

Secondly, the twist on splitting sets condition has some flexibility not shared by the twist on $z$-trivial splitting sets; namely, if $s(x,y,z)$ is twisted on splitting sets, then so is $s(x,y,z) +F(z)$ for any function $z$.  This fact may make it easier to check on certain examples.  
\end{remark}

\subsection{Surplus satisfying the twist on $z$-trivial splitting sets (but violating twist on splitting sets)}
The (TzSS) condition is strictly weaker than the (TSS) condition.  We demonstrate this here by presenting two examples which do not satisfy the twist on splitting sets condition, but do satisfy the weaker variant, twist on $z$-trivial splitting sets.
\begin{example}\textbf{(Strictly hedonic utilities)}\label{strictly hedonic}
Assume that the utilities of both consumers and producers depend only on goods, $u(x,y,z) =u(x,z)$ and $v(x,y,z) =v(y,z)$, and that for each fixed $x$ and $y$, all maxima of the function $z \mapsto u(x,z) +v(y,z)$ occur on the interior of $Z$.  Then $x-z$ twistedness on $u$ and $z-y$ twistedness on $v$ suffice to ensure the twist on $z$-trivial splitting sets condition on $s(x,y,z) =u(x,z) +v(y,z)$.
\end{example}

The proof of this assertion can be found in the appendix.

\begin{remark}
The conditions in Example \ref{strictly hedonic} \emph{do not} imply the twist on splitting sets condition, and as a result it is possible for the solution to the tripartite matching problem \eqref{classmm} with surplus $s(x,y,z)=u(x,z) +v(y,z)$ to be non-unique and non-pure.   Suppose, for example, $\alpha =\delta_{z_0}$ is concentrated at a point.  Then if a probability measure $\gamma$ on $X \times Y \times Z$ is in
$\Gamma_{XYZ}(\mu,\nu,\alpha)$ (ie, has marginals $\mu,\nu$ and $\alpha$) we have $z=z_0$, $\gamma$ almost surely, so that
\begin{eqnarray*}
\int_{X \times Y \times Z} s(x,y,z) d\gamma(x,y,z) &=&\int_{X \times Y \times Z} [u(x,z) +v(y,z)] d\gamma(x,y,z)\\
& =&\int_{X \times Y \times Z} [u(x,z_0) +v(y,z_0)] d\gamma(x,y,z) \\
& =&\int_Xu(x,z_0) d\mu(x)+\int_Yv(y,z_0) d\nu(y).
\end{eqnarray*}
As the last expression does not depend on $\gamma$, \emph{any} $\gamma \in \Gamma_{XYZ}(\mu,\nu,\alpha)$ maximizes the total surplus and is therefore stable.
\end{remark}

We close this subsection by revisiting the Tinbergen \cite{Tinbergen56} type surplus functions from Example \ref{bilinear_utilities}.  We show that the twist on $z$-trivial splitting sets holds in much greater generality that the twist on splitting sets (although we specialize slightly here, by replacing the general function $h(z)$ with a concave quadratic $z^tDz$).  
\begin{example}\label{nonsingular bilinear}
Suppose $X,Y,Z \subseteq \mathbb{R}^n$ are convex, and let 
$$s(x,y,z)=x^TAy +x^TBz + y^TCz +z^TDz+f(x)+g(y),$$
 with $D+D^T <0$.  Then $s$ is twisted on $z$-trivial splitting sets provided $C$ is invertible and
$$
B-A(C^T)^{-1}(D+D^T)
$$
 is non-singular.
\end{example}
\begin{proof}
Given a $z$-trivial splitting set $S_x$ at $x$, we note that if $(y,z) \in S_x$, maximality of $s(x,y,\cdot)$ at $z$ (recall Remark \ref{maximality char}) implies 
$$
0=D_zs(x,y,z) =B^Tx+ C^Ty+(D+D^T)z,
$$
so $y = -(C^T)^{-1}[B^Tx+(D+D^T)z]$.  For a given $p$, we will show that only one point of the form $(y,z) =(-(C^T)^{-1}[B^Tx+(D+D^T)z],z)$ can satisfy the condition
$$
p=D_xs(x,y,z)=Ay+Bz+Df(x) = A(-(C^T)^{-1}[B^Tx+(D+D^T)z]) +Bz+Df(x). 
$$
Indeed, the mapping 
$$
z \mapsto A(-(C^T)^{-1}[B^Tx+(D+D^T)z]) +Bz+Df(x) =-A(C^T)^{-1}B^Tx +[B-A(C^T)^{-1}(D+D^T)]z
+Df(x)$$ 
is affine and injective by assumption.  This completes the proof.
\end{proof}
\begin{remark}
In the model above, if $A$ is also invertible, we have 
$$
B-A(C^T)^{-1}(D+D^T) = A(C^T)^{-1}[C^TA^{-1}B-(D+D^T)].
$$
If in addition, we have $C^TA^{-1}B +B^T(A^T)^{-1}C>0$, then the surplus is twisted on splitting sets, according to Example \ref{bilinear_utilities}.  Twist on $z$-trivial splitting sets is much weaker, requiring only \emph{invertibility} of $[C^TA^{-1}B-(D+D^T)]$ rather than \emph{positivity} of its symmetric part, which is implied by the condition $C^TA^{-1}B +B^T(A^T)^{-1}C>0$ in Example \ref{bilinear_utilities}.

Furthermore, the matrices $B$ and $A$ in this example are \emph{not required} to have full rank; in particular, this model incorporates low dimensional buyer seller interactions, where preferences of buyers/sellers for their partners are dependent on only some of their characteristics (for instance, if all the entries of $A$ are $0$ except the upper left hand corner $A_{11}$,  partners' preferences depend only on the first characteristics, $x_1$ and $y_1$).  In its general form, the model interpolates between the strictly  hedonic case, where $A=0$, and the case with strong, full dimensional interactions between buyers and seller, when $A$ has full rank.
\end{remark}
\subsection{A surplus violating twist on $z$-trivial splitting sets, and a non pure solution}
Here we exhibit an example of a surplus violating the twist on $z$-trivial splitting sets condition, and demonstrate explicitly that in this case, matching equilibria may not be pure.

We let $X,Y,Z$ be intervals in $\mathbb{R}$; the consumers' and sellers' surplus are given respectively by $u(x,y,z) = xy+xz$ and $v(x,y,z)=-yz-z^2/2$, so that $s(x,y,z) =xy+xz-yz-z^2/2$.  It seems reasonable to interpret this surplus economically as a toy model for the effects of ethical business practices.  The variable $x \in X$ will represent the income of a consumer and $z \in Z$ the quality of a good.    Firms will be differentiated according to a variable $y \in Y$ which we may think of as reflecting the ethicality of their business practices (as perceived by consumers); for example, firms with large values of $y$ may provide their workers with better working conditions.  Consumers' preferences then have two supermodular terms, reflecting separately their preferences to buy higher quality goods and to purchase them from more ethical businesses (the supermodularity of $xy$ may be interpreted as consumers with more disposable income having stronger preferences for ethically produced goods than their lower income counterparts).  Producers' preferences are independent of consumers, but their costs $yz+z^2/2$ include a quadratic term in good quality and also a supermodular term $yz$, meaning more ethical firms have higher marginal production costs (for instance, producing a higher quality good may take more hours of labour than a lower quality good -- the resulting difference in cost will be higher for a firm paying higher wages).

Now note that for $U(x) =\frac{x^2}{2}$ and $V(y) =\frac{y^2}{2}$, we have

\begin{equation}
s(x,y,z) - U(x)-V(y) =-\frac{|x-y-z|^2}{2} \leq 0
\end{equation}
with equality when $x=y+z$.  Then taking $\gamma$ to be uniform measure on the set $X\times Y \times Z \cap \{x=y+z\}$, we immediately get that $\gamma$ is a stable matching measure for its marginals $\mu =(P_X)_\# \gamma$ and $\nu =(P_Y)_\# \gamma$, with payoff funtions $U$ and $V$.  This matching is certainly not pure; each consumer $x$ is indifferent among a continuum of choices of producers $y$.

We note that when consumer $x$ and producer $y$ match together, they exchange product $z =x-y$, for price $p_{x,y,z} =u(x,y,z)-U(x) =x(y+z) -x^2/2 =x^2/2$  .  A $y$ varies,  increasing favourability of the firm $y$ to the consumer $x$ is exactly offset by the decreasing quality of the good $z=x-y$ they exchange, and the price remains constant.
\begin{remark}
By example \eqref{nonsingular bilinear}, the surplus function $s(x,y,z) =xy+xz-yz-az^2$ \emph{is} twisted on $z$-trivial splitting sets for any constant $a$ \emph{other} than $a=\frac{1}{2}$, indicating that the previous example is highly non generic.
\end{remark}

\section{Conclusion}
This paper studies a general hybrid matching-hedonic model where agents match according to their preferences for both their partners and the good or contract they exchange.  In contrast to strict matching and strict hedonic problems, these mixed models do not seem to have received much theoretical attention yet, but are quite natural in a variety of settings. 

The hybrid problem has a natural connection with tripartite matching, or multi-marginal optimal transport; specifically, every stable matching in the hybrid model solves a corresponding optimal transport problem.  This observation, together with known results on multi-marginal optimal transport, can be exploited to reveal information on the structure of matching patters.  In particular, locally, the dimension of the support of a stable matching measure is controlled in terms of the mixed second order partial derivatives of the surplus; this result holds without any conditions on the distributions $\mu$ and $\nu$ of agents.    In addition, if $\mu$ is absolutely continuous, the twist on splitting sets condition is known to imply purity and uniqueness of stable matchings in multi-marginal optimal transport and therefore immediately implies the same for the hybrid problem.  It can also be used as a guide to develop a weaker variant, twist on $z$-trivial splitting sets, which implies purity and uniqueness in the hedonic-matching problem but not in the more general tripartite matching problem.

\begin{appendices} 
\section{Proofs}

\subsection{Proof of variational formulation: Theorem \ref{varformulation}}
The proof requires the following Lemma, expressing a duality result for the linear maximization \eqref{mm}.  
\begin{lemma}
\begin{equation}\label{duality}
\sup_{\gamma \in \Gamma_{XYZ}(\mu,\nu)} \int_{X \times Y \times X}s(x,y,z) d\gamma(x,y,z)=\inf_{U,V}\int_X U(x)d\mu(x) +\int_YV(y)d\nu(y)
\end{equation}
where the infimum on the right hand side is taken over the set of continuous functions $U \in C(X)$ and $V \in C(Y)$ satisfying $U(x) +V(y) \geq s(x,y,z)$ throughout $X \times Y \times Z$. Furthermore, the infimum on the right hand side is attained.
\end{lemma}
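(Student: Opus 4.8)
The statement to prove is the Kantorovich-type duality for the linear program \eqref{mm}, together with attainment on the dual side. This is a completely standard fact in optimal transport theory, so the plan is to follow the usual route. First I would establish \emph{weak duality} ("$\le$" in \eqref{duality}): for any admissible $\gamma \in \Gamma_{XYZ}(\mu,\nu)$ and any continuous $U,V$ with $U(x)+V(y)\ge s(x,y,z)$ on $X\times Y\times Z$, integrate the pointwise inequality against $\gamma$ and use the marginal conditions $\gamma(A\times Y\times Z)=\mu(A)$, $\gamma(X\times B\times Z)=\nu(B)$ to get $\int s\,d\gamma \le \int U\,d\mu+\int V\,d\nu$; taking sup over $\gamma$ and inf over $(U,V)$ gives the inequality. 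This step is a one-line calculation.

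**The substantive step: strong duality and attainment.** For the reverse inequality "$\ge$", the cleanest approach is to invoke a general minimax/Fenchel–Rockafellar duality theorem. One sets up the primal as minimization of the linear functional over the convex set $\Gamma_{XYZ}(\mu,\nu) \subset \mathcal{M}(X\times Y\times Z)$, views $\mathcal{M}(X\times Y\times Z)$ as the dual of $C(X\times Y\times Z)$ (Riesz–Markov), and writes the constraint "$\gamma$ has marginals $\mu,\nu$" via pairing against functions of the form $U(x)+V(y)$. Applying the Fenchel–Rockafellar theorem (the relevant hypothesis — continuity/finiteness of one of the conjugate functionals at an interior point — is trivially satisfied here since $s$ is bounded, so one can take $U,V$ constant and large) yields no duality gap. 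Alternatively, and perhaps more in the spirit of "no new ideas," one can cite Villani \cite{V2} or Santambrogio \cite{Santambrogio15p} directly, since the only structural feature used is that the cost $s$ is continuous on a compact product space; the number of marginals (here two prescribed marginals $\mu,\nu$ on $X$ and $Y$, with the $Z$-marginal free) does not affect the argument — indeed, leaving the $Z$-marginal unconstrained only makes the feasible set larger and the duality easier, with the dual test functions simply not depending on $z$. For attainment of the infimum on the right: the admissible dual pairs $(U,V)$ can be normalized (e.g. replace $U$ by $U^{c c}$-type $c$-transforms, or just use that one may assume $U$ is Lipschitz with constant controlled by that of $s$, and similarly $V$, after subtracting/adding a constant to fix the ambiguity $U\mapsto U+t$, $V\mapsto V-t$); this gives a uniformly bounded, equi-Lipschitz family, so by Arzelà–Ascoli a minimizing sequence has a uniformly convergent subsequence, and the limit is continuous, still satisfies the pointwise constraint $U(x)+V(y)\ge s(x,y,z)$ (inequalities pass to uniform limits), and achieves the infimum by continuity of the linear functional.

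**Main obstacle.** There is no deep obstacle — this is why the authors relegate it to an appendix and note it "requires no new ideas." The only points requiring a little care are: (i) justifying the normalization that makes the minimizing sequence of dual pairs equi-Lipschitz and uniformly bounded (the standard $c$-transform / double-convexification trick, using compactness of $X,Y,Z$ and Lipschitz continuity of $s$), and (ii) making sure the $c$-transform operations keep $U\in C(X)$ and $V\in C(Y)$ — i.e. the replacement functions genuinely depend only on $x$ and only on $y$ respectively, which is automatic since $V(y):=\sup_{x,z}\big(s(x,y,z)-U(x)\big)$ depends only on $y$ and $U(x):=\sup_{y,z}\big(s(x,y,z)-V(y)\big)$ depends only on $x$. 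Once these are in place, weak duality plus the no-gap theorem plus Arzelà–Ascoli assemble into the result. I would write the proof in roughly that order: weak duality; then either cite the Kantorovich duality theorem wholesale or run Fenchel–Rockafellar; then the compactness argument for attainment.
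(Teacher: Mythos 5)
Your proposal matches the paper's own proof essentially step for step: the paper also establishes the identity via Fenchel--Rockafellar duality (viewing $\mathcal M(X\times Y\times Z)$ as the dual of $C(X\times Y\times Z)$ by Riesz representation and encoding the marginal constraints through test functions of the form $U(x)+V(y)$), and then obtains attainment of the infimum by the same double $c$-transform normalization, using the Lipschitz bound inherited from $s$ and compactness to extract a minimizer via Arzel\`a--Ascoli. No substantive difference in route or in the points requiring care.
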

We will refer to the minimization on the right hand side as the dual problem to \eqref{mm}.  The lemma is a variant of the standard, optimal transport duality theorem, and it's proof is a straightforward adaptation of the proof of that result in \cite[Theorem 1.3]{V}. 
\begin{proof}
The Riesz representation theorem implies that the dual of $C(X\times  Y \times Z)$ is the set $M( X\times  Y \times Z)$ of signed regular Borel measures on $X \times  Y \times  Z$.  We define the functionals $F$ and $G$ on $C(X\times  Y\times  Z)$ by

\[ F(f)=
\left\{
  \begin{array}{l l}
0 &  \text{if }f(x,y,z) \geq s(x,y,z) \text{ for all } (x,y,z)\\
    +\infty &  \text{otherwise}
  \end{array} \right. \]

and

\[ G(f)=
\left\{
  \begin{array}{l l}
\int_X U(x)\mu(x) + \int_YV(y) d\nu(y) &  \text{if } f(x,y,z) =U(x) +V(y)\\
    +\infty &  \text{otherwise.}
  \end{array} \right. \]

Fenchel-Rockafellar duality (see, for example, Theorem 1.9 in \cite{V}) then asserts that

\begin{equation}\label{Fenchel_Rockafellar}
\inf_{f \in C(X\times  Y \times Z)}[F(f) +G(f)] =\sup_{\gamma \in M( X\times  Y \times Z)}[-F^*(-\gamma) -G^*(\gamma)]
\end{equation}
where $F^*$ and $G^*$ are the Legendre-Fenchel transforms of $F$ and $G$, respectively.  It is easy to check that the infimum above coincides with the infimum in \eqref{duality}.  On the other hand, we compute

\begin{eqnarray*}
F^*(-\gamma) &:=&\sup_{f \in C(X\times  Y \times Z)} [-\int_{ X\times  Y \times Z}f(x,y,z)d\gamma -F(f)]\\
&=& \sup_{f \in C(X\times Y \times Z),\text{ }f \geq s}-\int_{X\times Y \times Z}f(x,y,z)d\gamma\\
&=&-\inf_{f \in C(X\times Y \times Z),\text{ }f \geq s}\int_{X\times Y \times Z}f(x,y,z)d\gamma.
\end{eqnarray*}
Now, if $\gamma$ is not a positive measure, the infimum above is clearly $-\infty$, while if it is a positive measure, the infimum is attained at $f=s$. So we have

\[ F^*(-\gamma)=
\left\{
  \begin{array}{l l}
-\int_{X \times Y\times Z}s(x,y,z)d\gamma(x,y,z) &  \text{if } \gamma \geq 0\\
    +\infty &  \text{otherwise.}
  \end{array} \right. \]

Similarly, 

\begin{eqnarray*}
G^*(\gamma) &:=&\sup_{f \in C(X\times Y \times Z)} [\int_{X\times Y \times Z}f(x,y,z)d\gamma -G(f)]\\
&=& \sup_{(U,V) \in C(X) \times C(Y)}\Big[\int_{X\times Y \times Z}[U(x) +V(y)]d\gamma(x,y,z) -\int_{X}U(x)d\mu(x) -\int_YV(y)d\nu(y)\Big]\\
&=& \sup_{(U,V) \in C(X) \times C(Y)}\Big[\int_{X}U(x)d(\gamma_X-\mu)(x) +\int_YV(y)d(\gamma_Y-\nu)(y)\Big],
\end{eqnarray*}
where $\gamma_X =(P_X)_\#\gamma$ and $\gamma_Y =(P_Y)_\#\gamma$ are the projections of $\gamma$ onto $X$ and $Y$, respectively.  The integrals inside the supremum are clearly $0$ for each choice of $U,V$ if $\gamma$ has $\mu$ and $\nu$ as its $X$ and $Y$ marginals, respectively, and so the supremum is $0$ in this case.  If the marginals of  $\gamma$ are not $\mu$ and $\nu$, then the supremum is clearly $+\infty$, so we have

\[ G^*(\gamma)=
\left\{
  \begin{array}{l l}
0 &  \text{if } \gamma\text{ has marginals } \mu \text{ and } \nu\\
    +\infty &  \text{otherwise.}
  \end{array} \right. \]

Noting that, if $\gamma$ is a signed measure, $\gamma \in \Gamma_{XYZ}(\mu,\nu)$ is equivalent to $\gamma$ being non-negative and having $\mu$ and $\nu$ as it $X$ and $Y$ marginals, it is then straightforward to see that the supremum in \eqref{Fenchel_Rockafellar} is exactly the supremum in \eqref{duality}.

To obtain existence in the dual problem, note that for any $U,V$ such that $U(x) +V(y) \geq s(x,y,z)$, we have
$$
U^s(y):=\sup_{x,z}s(x,y,z) -U(x) \leq V(y)
$$
and then
$$
U^{ss}(x):=\sup_{y,z}s(x,y,z) -U^s(y) \leq U(x).
$$ 
Now, as $s$ is assumed Lipschitz, $U^s$ and $U^{ss}$ are Lipschitz with the same constant $C$, by a now classical argument of McCann \cite[Lemma 2]{m3}.   By shifting $U^s \rightarrow U^s +a$ and $U^{ss} \rightarrow U^s -a$,  we may also assume that $U^s(\bar x) =0$ for some fixed $\bar x$; together with the Lipschitz condition and compactness, this implies that $|U^s|,|U^{ss}| \leq K$ for some fixed $K$.  

Noting that we have $U^{ss}(x) +U^s(y) \geq s(x,y,z)$, and $\int_XU^{ss}(x) d\mu(x)+\int_YU^s(y)d\nu(y) \leq  \int_XU(x) d\mu(x)+\int_YV(y)d\nu(y)$, we may take the minimization in the dual problem over functions which in addition to the constraint $U(x) +V(y) \geq s(x,y,z)$ are Lipschitz with uniform constant $C$ and bounded by the uniform constant $K$.   This set is compact with respect to uniform convergence, by the Arzela-Ascoli theorem, which implies existence of a minimizer.

\end{proof}
The preceding lemma can be used to prove Theorem \ref{varformulation}; the solutions $U$ and $V$ to the dual problem turn out to be exactly the payoff functions.
\begin{proof}

Given a stable matching $\bar \gamma$ on $X \times Y \times Z$, and associated payoff functions $U(x)$ and $V(y)$, we integrate the inequality \eqref{stabilitycond} against any other matching $\gamma \in \Gamma_{XYZ}(\mu,\nu)$ to obtain
\begin{equation}
\int_{X \times Y \times Z} s(x,y,z)d\gamma(x,y,z) \leq \int_{X \times Y\times Z} U(x) +V(y) d\gamma(x,y,z)  =\int_{X } U(x) d\mu(x) +\int_{Y }V(y) d\nu(y).  
\end{equation}
On the other hand, the stability of $\bar \gamma$ means that we have equality in \eqref{stabilitycond} $\bar \gamma$ almost everywhere, so we have equality in the preceding argument when $\gamma =\bar \gamma$.  This means the stable matching $\bar \gamma$ is optimal in \eqref{mm}.

On the other hand, if $\bar \gamma$ solves \eqref{mm}, let $U$ and $V$ be solution to the dual problem, guaranteed to exist by the lemma above.  Then, we have $s(x,y,z) \leq U(x) +V(y)$ everywhere by definition, and so
$$
\int_{X \times Y\times Z} s(x,y,z)d\bar \gamma(x,y,z) \leq \int_{X \times Y\times Z} U(x) +V(y) d\bar\gamma(x,y,z)  =\int_{X } U(x) d\mu(x) +\int_{Y }V(y) d\nu(y).  
$$
However, the duality lemma states that we actually have equality, $\int_{X \times Y \times Z} s(x,y,z)d\bar\gamma(x,y,z)   =\int_{X } U(x) d\mu(x) +\int_{Y }V(y) d\nu(y)$, which is possible only if $s(x,y,z) = U(x) +V(y)$, $\bar \gamma$ almost everywhere.  Thus, $U$ and $V$ are payoffs for $\bar \gamma$, and so $\bar \gamma$ is stable.
\end{proof}

\subsection{Proof the twist on $z$-trivial splitting sets implies purity: Theorem \ref{purity thm}}
\begin{proof}
Let $\gamma$ be a stable equilibrium and $U,V$ the corresponding payoff functions.  Let $S \subseteq X\times Y \times Z$ be the set where equality is attained in \eqref{stabilitycond}; as $spt(\gamma)\subseteq S$, it suffices to show that for $\mu$ almost all  $x$, the set $S_x:=\{(y,z): (x,y,z) \in S\}$ is a singleton.  Note that, as an immediate consequence of the definition, $S_x$ is a $z$-trivial splitting set.

We first set $V^s(x) =\sup_{(y,z) \in Y \times Z}s(x,y,z) -V(y)$.  It is known that the fact that $s$ is Lipschitz in $x$ implies that $V^s$ is in fact Lipschitz as well \cite{m3}, and hence differentiable Lebesgue almost everywhere by Rademacher's theorem.  In addition, for a fixed $x$,  \eqref{stabilitycond} implies $s(x,y,z) -V(y) \leq U(x)$ for every choice of $(y,z)$, and so taking supremum over $(y,z)$ yields
$$
V^s(x) \leq U(x).
$$
Therefore, for all $(x,y,z)$, we have the following string of inequalities
$$
 U(x) +V(y) \geq V^s(x) +V(y) \geq s(x,y,z).
$$
Furthermore, as the first and last terms are equal on $S$, we must have equality throughout this set; in particular,

$$
V^s(x) +V(y) =s(x,y,z)
$$
on $S$.  Now, for every $x$ at which $V^s$ is differentiable, and $y,z \in S_x$, the envelope theorem implies

\begin{equation}\label{eqn: envelope}
D  V^s(x) =D_x s(x,y,z).
\end{equation}

However, as $S_x$ is a $z$-trivial splitting set at $x$,  the $(TzSS)$  condition implies that this uniquely determines $y$ and $z$.  That is, the splitting set $S_x$ is a singleton.  This holds for each $x$ where $V^s$ is differentiable, which is Lebesgue almost every $x$, and hence $\mu$ almost every $x$, by the absolute continuity of $\mu$.   For each such $x$, we define $(F_Y(x),F_Z(x))$ to be the unique $(y,z)$  satisfying \eqref{eqn: envelope}; $\gamma$ is then concentrated on the graph of $(F_Y,F_Z)$, and is therefore pure.  This completes the proof.
\end{proof}

\subsection{Proof of equivalence between the hybrid matching-hedonic problem and the reduced matching problem: Proposition \ref{prop: equivalence to reduced matching}}
\begin{proof}
Let $\gamma$ be a maximizer in \eqref{mm} and set $\sigma =\gamma_{XY} =(P_{XY})_\#\gamma $.  Then clearly $\sigma \in \Gamma_{XY}(\mu,\nu)$; we will show that it maximizes \eqref{twomarg}.   For any other $\tilde \sigma \in \Gamma_{XY}(\mu,\nu)$, set $ \tilde \gamma =(Id, Id, \bar z)_\# \tilde \sigma$ and note that $\tilde \gamma \in \Gamma_{XYZ}(\mu,\nu)$.

We have

\begin{eqnarray*}
\int_{X \times Y}\bar s(x,y) d\sigma(x,y) &=&\int_{X \times Y }s(x,y, \bar z(x,y)) d\sigma(x,y) \\
&=&\int_{X \times Y \times Z }s(x,y, \bar z(x,y)) d\gamma(x,y, z),\text{ as }  \sigma =(P_{XY})_\#\gamma \\
& \geq &\int_{X \times Y \times Z }s(x,y, z) d\gamma(x,y, z),\\
& \geq &\int_{X \times Y \times Z }s(x,y, z) d\tilde \gamma(x,y, z),\text{ as } \gamma  \text{ maximizes }\eqref{mm}\\
&=&\int_{X \times Y }s(x,y, \bar z(x,y)) d\tilde \sigma(x,y)\text{ as }\tilde \gamma =(Id \times Id \times \bar s)_\# \tilde \sigma\\
&=&\int_{X \times Y}\bar s(x,y) d\tilde \sigma(x,y)
\end{eqnarray*}
As $\tilde \sigma \in \Gamma_{XY}(\mu,\nu)$ was arbitrary, it follows that $\sigma$ is optimal in \eqref{twomarg}.  

On the other hand, let $\sigma$ be any maximizer in \eqref{twomarg} and $\gamma =(Id , Id , \bar s)_\#  \sigma$.  It is clear that $\gamma \in \Gamma_{XYZ}(\mu,\nu)$; we need to show that it maximizes \eqref{mm}.  For any other $\tilde \gamma \in   \Gamma_{XYZ}(\mu,\nu)$, we set $\tilde \sigma = (P_{XY})_\#\tilde \gamma$ and observe $\tilde \sigma \in \Gamma_{XY}(\mu,\nu)$.  We then have, by reasoning similar to the above,

\begin{eqnarray*}
\int_{X \times Y \times Z }s(x,y, z) d\gamma(x,y, z)&=& \int_{X \times Y} s(x,y,\bar z(x,y)) d\sigma(x,y)\\
&=&\int_{X \times Y}\bar  s(x,y) d\sigma(x,y)\\
&\geq&\int_{X \times Y}\bar  s(x,y) d\tilde \sigma(x,y)\\
&=& \int_{X \times Y}   s(x,y,\bar z(x,y)) d\tilde \sigma(x,y)\\
&=& \int_{X \times Y \times Z}  s(x,y,\bar z(x,y)) d\tilde \gamma(x,y,z)\\
&\geq & \int_{X \times Y \times Z}  s(x,y,z) d\tilde \gamma(x,y,z).
\end{eqnarray*}
This  yields optimality of $\gamma$ in \eqref{mm} and completes the proof.
\end{proof}

\subsection{Proof of equivalence between twistedness of $\bar s$ and twistedness on $z$-trivial splitting sets of $s$: Theorem \ref{equivalence to classcial twist}}
\begin{proof}
First suppose $s$ is twisted on $z$ trival splitting sets.  Fix $x$, set $V(y)=\bar s(x,y)$ and

$$
S_x=\{(y,z): z\in \argmax(s(x,y,z))\}
$$

Then $S_x$ is a $z$-trivial splitting set for $s$ at $x$, with splitting function $V$, as for any $y,z$ we have by definition

$$
V(y)  = \bar s(x,y) \geq s(x,y,z)  
$$
with equality whenever $(y,z) \in S_x$.

Now, choose $y_0,y_1$ satisfying

\begin{equation}\label{first_order_equality}
D_x\bar s(x,y_0) =D_x\bar s(x,y_1);
\end{equation}
we want to show $y_0 = y_1$. We can choose $z_0 \in \argmax_z (s(x,y_0,z))$ and $ z_1\in  \argmax_z (s(x,y_1,z))$, so that $(y_0,z_0), (y_1,z_1) \in S_x$.  By the envelope condition, we have

$$
D_x \bar s(x,y_0) =D_xs(x,y_0,z_0)
$$  

and 

$$
D_x \bar s(x,y_1) =D_xs(x,y_1,z_1)
$$

Combined with \eqref{first_order_equality}, this implies that $D_xs(x,y_0,z_0) = D_xs(x,y_1,z_1) $.  As $(y_0,z_0)$ and $ (y_1,z_1)$ both belong to the $z$- trivial splitting set $S$, the twist on $z$-trivial splitting sets hypothesis  now implies $(y_0,z_0)=(y_1,z_1)$; in particular, $y_0 =y_1$ as desired.

 Conversely, assume $\bar s$ is twisted, and suppose that $(y_0,z_0), ( y_1, z_1) \in S_x$, where $S_x$ is a $z$-trivial  splitting set at $x$, such that
\begin{equation}\label{splitting set equality}
D_xs(x,y_0,z_0) =D_xs(x, y_1,  z_1);
\end{equation}
we need to show $(y_0,z_0) =(y_1,  z_1)$.  Let $V$ be the splitting function for $S_x$; then

$$
V(y_0) \geq s(x,y_0,z)
$$
for all $z$, with equality for $z=z_0$.  As the left hand side is independent of $z$, this tells us that $z_0 \in \argmax_z s(x,y_0,z)$ and so $\bar s(x,y_0) =s(x,y_0,z_0)$.  The envelope theorem then yields

$$
D_x \bar s(x,y_0) =D_xs(x,y_0,z_0).
$$

An identical argument implies 

$$
D_x \bar s(x,y_1) =D_xs(x,y_1,z_1)
$$
and so we have $D_x \bar s(x,y_0) =D_x \bar s(x,y_1)$.  The twist condition then gives us $y_0 =y_1=:y$.   It remains to verify that $z_0=z_1.$   To this end, note that \eqref{splitting set equality} now becomes

$$
D_xs(x,y,z_0) =D_xs(x, y,  z_1)
$$
and so the $x -z$ twist condition implies $z_0=z_1$, completing the proof.
\end{proof}

\subsection{Proof that strictly hedonic surpluses are twisted on $z$-trivial splitting sets: assertion in Example \ref{strictly hedonic}}
This  result actually follows by combining Theorem \ref{equivalence to classcial twist}  with a result in \cite{Pass13}, which asserts that the reduced surplus $\bar s$ corresponding to a strictly hedonic surplus $s$ is twisted (under the assumptions in Example \ref{strictly hedonic}); however, we feel it is enlightening to provide a direct proof as well. 
\begin{proof}
Let $S_x$ be a $z$-trivial splitting set at $x$, with splitting function $V(y)$.  Assume $D_xs(x,y_0,z_0) = D_xs(x, y_1,  z_1)$ for $(y_0,z_0), ( y_1,  z_1) \in S_x$; we need to show $(y_0,z_0)=( y_1, z_1)$.   Note that the form of $s$ implies
$$
 D_xu(x,z_0) =  D_xs(x,y_0,z_0) = D_xs(x,y_1, z_1) =D_xu(x,z_1);
$$
the $x - z$ twist condition on $u$ then implies that $z_0=z_1$.

It remains to show $y_0 =y_1$.  Now,  $ z \mapsto u(x,z) +v(y_0,z)- V(y_0)$ is maximized at $z=z_0$, and so its derivative vanishes there (note $z \in Z^0$ by assumption):

$$
D_zu(x,z_0)+D_zv(y_0,z_0)=0
$$
or 
$$
D_zu(x,z_0)=-D_zv(y_0,z_0)
$$
Similarly,

$$
D_zu(x, z_1)=-D_zv(y_1, z_1),
$$
which, as $z_0= z_1$, combines with the above to yield $D_zv(y_0,z_0)=D_zv( y_1, z_0)$.  The $z -y$ twistedness of $v$ then yields $y_0=y_1$, which completes the proof.
\end{proof}
\end{appendices}
\bibliographystyle{plain}
\bibliography{biblio}

\begin{thebibliography}{10}

\bibitem{Becker73}
G.~Becker.
\newblock {A theory of marriage. Part I}.
\newblock {\em J. Political Econom.}, 81:813--846, 1973.

\bibitem{bren}
Y.~Brenier.
\newblock Decomposition polaire et rearrangement monotone des champs de
  vecteurs.
\newblock {\em C.R. Acad. Sci. Pair. Ser. I Math.}, 305:805--808, 1987.

\bibitem{Caf}
L.~Caffarelli.
\newblock Allocation maps with general cost functions.
\newblock In {\em Partial Differential Equations and Applications}, volume 177
  of {\em Lecture Notes in Pure and Applied Math}, pages 29--35. Dekker, New
  York, 1996.

\bibitem{ce}
G.~Carlier and I.~Ekeland.
\newblock Matching for teams.
\newblock {\em Econom. Theory}, 42(2):397--418, 2010.

\bibitem{CMN}
P.-A. Chiappori, R.~McCann, and L.~Nesheim.
\newblock Hedonic price equilibria, stable matching and optimal transport;
  equivalence, topology and uniqueness.
\newblock {\em Econom. Theory.}, 42(2):317--354, 2010.

\bibitem{DGZ}
A.~Dupuy, A.~Galichon, and L.~Zhao.
\newblock {Migration in China: to work or to wed?}
\newblock Working paper available at
  www.feb.kuleuven.be/eng/ew/seminars/papers2015/paper\_Dupuy.pdf.

\bibitem{ekeland2}
I.~Ekeland.
\newblock { An optimal matching problem.}
\newblock {\em ESAIM Control. Optim. Calc. Var.}, 11(1):57--71, 2005.

\bibitem{ekeland3}
I.~Ekeland.
\newblock Existence, uniqueness and efficiency of equilibrium in hedonic
  markets with multidimensional types.
\newblock {\em Econom. Theory}, 42(2):275--315, 2010.

\bibitem{Galichon16}
A.~Galichon.
\newblock {\em Optimal Transport Methods in Economics}.
\newblock Princeton University Press, 2016.

\bibitem{G}
W.~Gangbo.
\newblock Habilitation thesis, Universite de Metz, available at
  http://people.math.gatech.edu/~gangbo/publications/habilitation.pdf, 1995.

\bibitem{gm}
W.~Gangbo and R.~McCann.
\newblock The geometry of optimal transportation.
\newblock {\em Acta Math.}, 177:113--161, 1996.

\bibitem{GOZ}
N.~Gretsky, J.~Ostroy, and W.~Zame.
\newblock The nonatomic assignment model.
\newblock {\em Economic Theory}, 2(1):103--127, 1992.

\bibitem{KP2}
Y.-H. Kim and B.~Pass.
\newblock {A general condition for Monge solutions in the multi-marginal
  optimal transport problem}.
\newblock {\em SIAM J. Math. Anal}, 46:1538--1550, 2014.

\bibitem{lev}
V.~Levin.
\newblock {Abstract cyclical monotonicity and Monge solutions for the general
  Monge-Kantorovich problem}.
\newblock {\em Set-Valued Analysis}, 7(1):7--32, 1999.

\bibitem{m3}
R.~McCann.
\newblock { Polar factorization of maps on Riemannian manifolds}.
\newblock {\em Geom. Funct. Anal.}, 11:589--608, 2001.

\bibitem{MomeniPass15}
A.~Momeni and B~Pass.
\newblock Solutions to multi-marginal optimal transport problems concentrated
  on several graphs.
\newblock To appear in \textit{ESAIM: Control, Optimization and Calculus of
  Variations}.

\bibitem{MourifieSiow14}
I.~Mourifie and A.~Siow.
\newblock Cohabitation versus marriage: Marriage matching with peer effects.
\newblock Working paper, 2014.

\bibitem{P4}
B.~Pass.
\newblock {\em Structural results on optimal transportation plans.}
\newblock PhD thesis, University of Toronto, 2011.
\newblock Available at http://www.ualberta.ca/~pass/thesis.pdf.

\bibitem{P}
B.~Pass.
\newblock On the local structure of optimal measures in the multi-marginal
  optimal transportation problem.
\newblock {\em Calculus of Variations and Partial Differential Equations},
  43:529--536, 2012.
\newblock 10.1007/s00526-011-0421-z.

\bibitem{Pass13}
B.~Pass.
\newblock Regularity properties of optimal transportation problems arising in
  hedonic pricing models.
\newblock {\em ESAIM Control Optim. Calc. Var.}, 19(3):668--678, 2013.

\bibitem{Pass14}
B.~Pass.
\newblock Multi-marginal optimal transport: theory and applications.
\newblock {\em ESAIM Math. Model. Numer. Anal.}, 49(6):1771--1790, 2015.

\bibitem{Rosen}
S.~Rosen.
\newblock Hedonic prices and implicit markets: Product differentiation in pure
  competition.
\newblock {\em Journal of Political Economy}, 82(1):pp. 34--55, 1974.

\bibitem{RothschildStiglitz}
M.~Rothschild and J.~Stiglitz.
\newblock Equilibrium in competitive insurance markets: An essay on the
  economics of imperfect information.
\newblock {\em The Quarterly Journal of Economics}, 90(4):pp. 629--649, 1976.

\bibitem{Santambrogio15p}
F.~Santambrogio.
\newblock {\em Optimal Transport for Applied Mathematicians}.
\newblock Birkh\"{a}user, Basel, 2015.

\bibitem{SS}
L.S. Shapley and M.~Shubik.
\newblock The assignment game i: The core.
\newblock {\em International Journal of Game Theory}, 1(1):111--130, 1971.

\bibitem{Tinbergen56}
J.~Tinbergen.
\newblock On the theory of income distribution.
\newblock {\em Weltwirtschaftliches Archiv}, 77:155--175, 1956.

\bibitem{V}
C.~Villani.
\newblock {\em Topics in optimal transportation}, volume~58 of {\em Graduate
  Studies in Mathematics.}
\newblock American Mathematical Society, Providence, 2003.

\bibitem{V2}
C.~Villani.
\newblock {\em Optimal transport: old and new}, volume 338 of {\em Grundlehren
  der mathematischen Wissenschaften.}
\newblock Springer, New York, 2009.

\end{thebibliography}
\end{document}